\documentclass[conference]{IEEEtran}
\IEEEoverridecommandlockouts
\usepackage{cite}
\usepackage{color}
\usepackage{url}
\usepackage{hyperref}
\usepackage{amsmath,amssymb,amsfonts}
\usepackage{amsthm}

\newtheorem{lemma}{Lemma}
\newtheorem{corollary}{Corollary}
\newtheorem{assumption}{Assumption}
\newtheorem{remark}{Remark}

\usepackage{amsmath}

\usepackage{graphicx}
\usepackage{xcolor}
\usepackage{textcomp}
\usepackage{makecell}
\usepackage{xcolor}
\usepackage{multirow}
\usepackage{cuted}

\usepackage{silence}
\usepackage{dblfloatfix}
\usepackage{algorithm}
\usepackage{comment}
\usepackage{array}
\usepackage{tabularx}
\usepackage{booktabs}
\usepackage{balance}
\usepackage{array}    % needed for the ragged-right paragraph column below
\usepackage{booktabs} % for \toprule \midrule \bottomrule (if not already loaded)

\newcolumntype{L}[1]{>{\raggedright\arraybackslash}p{#1}}

\usepackage{cuted}
\newtheorem{proposition}{Proposition}

\usepackage{booktabs}

\usepackage{algpseudocode} 
\usepackage{setspace}
\usepackage[caption=false,font=footnotesize]{subfig}

\usepackage{footnote}
\makesavenoteenv{figure*}

\newtheorem{theorem}{Theorem}
\newtheorem{definition}{Definition}
\newtheorem{example}{Example}

\def\BibTeX{{\rm B\kern-.05em{\sc i\kern-.025em b}\kern-.08em
    T\kern-.1667em\lower.7ex\hbox{E}\kern-.125emX}}

\begin{document}

\title{From Propagation to Protection: Risk-Aware Diffusion for Harm Minimization in Signed Social Networks
}

\author{
\IEEEauthorblockN{
Aaqib Zahoor,
Janibul Bashir,
and Iqra Altaf Gillani
}
\IEEEauthorblockA{
Department of Information Technology,
National Institute of Technology Srinagar, India\\
Email: aaqib\_phaite003@nitsri.ac.in,
iqraaltaf@nitsri.ac.in
}
}

\maketitle

\begin{abstract}
Real-world social relationships are not uniformly supportive. Information through hostile connections can increase resistance, anxiety, or misinformation rather than adoption. Classical models such as Independent Cascade and Linear Threshold, together with Influence Maximization (IM), which seeks to maximize spread from a limited seed set, treat activation as discrete and irreversible. Its counterpart, Influence Minimization (Inf-Min), seeks to limit undesirable spread but similarly relies on simplified activation assumptions. Signed extensions incorporate polarity but largely retain this irreversibility, leaving no room for beliefs to weaken, reverse, or recover under competing influence. Moreover, both objectives typically treat individuals uniformly, without accounting for differences in vulnerability or prioritizing the protection of those most at risk.
We introduce \textit{RASH}, a signed, susceptibility-aware diffusion model in which node awareness is continuous, bounded, and non-monotonic, and prove that despite this added expressiveness it remains monotone and $\gamma$-weakly submodular in the regimes where only positive or negative edges exist, preserving tractable greedy approximation guarantees where strict submodularity provably fails. Building on RASH, we formulate \textit{Harm Minimization (HM)}, which maximizes the aggregate reach while minimizing the awareness shortfall (harm). We prove HM is NP-hard, yet its harm-reduction formulation inherits the same monotonicity and weak-submodularity structure, admitting a greedy algorithm that carries a bounded approximation ratio. Across six structurally diverse signed networks, RASH is the only diffusion model tested to the best of our knowledge that ever allows
awareness to reverse after activation, allowing sustained discouraging
influence to drive awareness from positive toward negative and HM achieves the highest harm reduction of any method evaluated including its own boundary cases (IM and Inf-Min).
\end{abstract}

\begin{IEEEkeywords}
Temporal Networks, Information Diffusion , Signed Social Networks, Influence Maximization  
\end{IEEEkeywords}

\section{Introduction}
\label{sec:Intro}

Social networks provide the pathways through which opinions, behaviors, and information spread among individuals. Since phenomena such as product adoption, political mobilization, public-health campaigns, and misinformation all evolve over these networks, understanding how influence propagates has become a fundamental problem in computer science, sociology, and public policy. However, real social relationships are not uniformly supportive. They can reinforce or discourage beliefs and behaviors, motivating the use of signed social networks, where edges represent either positive or negative influence. This distinction fundamentally changes the nature of diffusion. In unsigned networks, reaching a node is always beneficial, whereas in signed networks, influence transmitted through negative relationships may increase resistance, anxiety, or misinformation instead of promoting adoption.

Classical information diffusion models like Independent Cascade (IC) and Linear Threshold (LT)~\cite{zahoor2025diffusionmodelsinfluencemaximization} were largely developed before this distinction mattered to the problems they were built to solve. Such models, together with the influence maximization framework built upon them, assume purely positive influence and irreversible activation: a node is either inactive or permanently active, and every activation is, by construction, a success~\cite{4a}. This assumption makes the models analytically tractable but leaves no room for a node's belief to weaken, reverse, or degrade under competing influence. Extensions~\cite{ICE,LTE,PID,PLID} that introduce edge signs partially address this limitation, but nearly all retain irreversible activation: once a node commits to a sign, its state is permanent. This leaves existing models unable to capture phenomena such as relapse, reversal of radicalization, and waning conviction that frequently occur in real signed social networks.

This gap is not merely theoretical. Consider a mental-health awareness campaign, a cybersecurity advisory, a safety recall, or a crisis alert circulating on a social platform. Individuals in these networks do not simply receive information or not; rather, they receive competing information, and their eventual state depends on which influence---supportive or discouraging dominates, and on how resistant they are to negative pressure. Modeling this requires three ingredients absent from classical diffusion models: (i) a signed representation of edges distinguishing reinforcing from discouraging relationships, (ii) a notion of individual susceptibility capturing why the same discouraging signal affects different people differently, and (iii) a dynamic that allows awareness to evolve in both directions over time, including reversal after prior activation.

We introduce the \emph{Risk-Aware Signed Heterogeneous (RASH)} diffusion model to provide these three capabilities. The underlying picture is simple: a node weighs the support it receives from friendly neighbors against the discouragement it receives from hostile ones, in proportion to how susceptible that node is to negative pressure, and it only updates its belief when this combined signal is strong enough to clear a resistance threshold. Concretely, each node maintains a continuous awareness state in $[-1,1]$ that aggregates supportive and discouraging influence through a susceptibility-weighted combination, passes the result through a bounded nonlinear activation, and updates its state through this threshold-gated rule while retaining memory of its prior belief. Unlike classical models and their signed extensions, RASH consequently permits non-monotonic trajectories: a node's awareness can grow, shrink, or reverse sign entirely as the balance of influence in its neighborhood changes. We show that despite this added expressiveness, RASH retains enough mathematical structure---monotonicity and $\gamma$-weak submodularity---to support greedy optimization with provable approximation guarantees in both the pure-positive (influence maximization) and pure-negative (influence minimization) regimes.

RASH's ability to represent competing, reversible influence forms the foundation of our framework, but it also exposes a deeper question: what should a diffusion campaign actually optimize? Classical influence maximization seeks the seed set that reaches the largest number of individuals~\cite{4a}, implicitly treating every unreached individual as equally costly to miss. Its counterpart, influence minimization, instead seeks a blocking set that suppresses harmful diffusion, but does so uniformly, without distinguishing whose protection matters most. Figure~\ref{fig:comparisonofInfluenceMax} makes this concrete: each individual carries a personal safety threshold and a vulnerability weight that vary across the four representative campaigns shown, yet under both IM and Inf-Min every node's awareness is fixed for good the moment it is determined, and no node is treated as more urgent to protect than any other. Both objectives therefore share a structural blind spot---one that matters precisely for the protective campaigns motivating this work, leaving open the question of how a campaign should allocate its seed budget when some individuals are far more vulnerable, and far more reachable by harm, than others.
\begin{figure}[h]
    \centering
     \caption{Comparison of Influence Maximization (IM), Influence Minimization (Inf-Min), and the proposed Harm Minimization (HM) in signed social networks. Unlike IM and Inf-Min, HM explicitly protects vulnerable nodes by limiting the propagation of harmful awareness while preserving beneficial diffusion}
    \label{fig:comparisonofInfluenceMax}
    \includegraphics[width=0.99\linewidth]{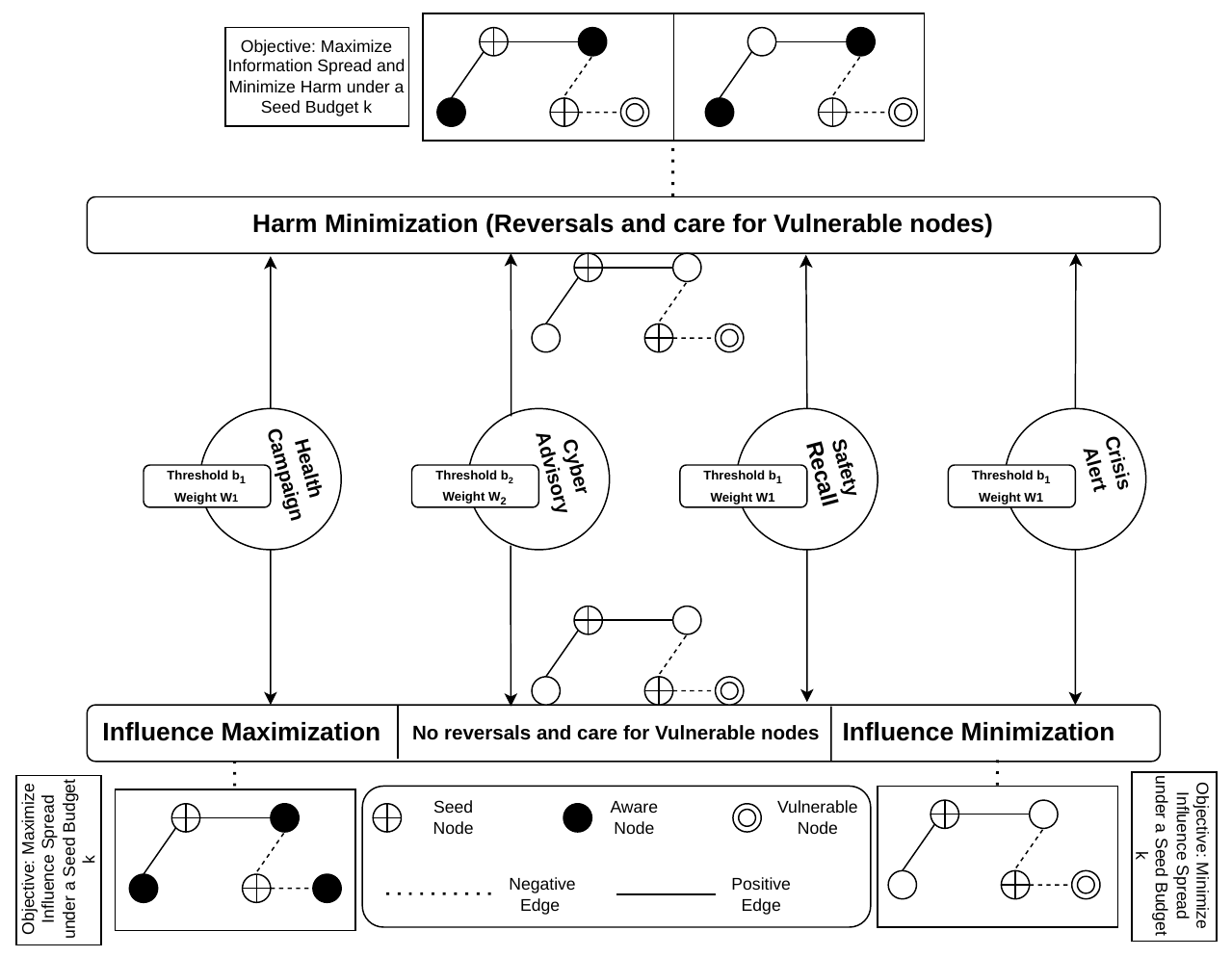}
   \end{figure}

This is the gap our second contribution addresses. We formulate the \emph{Harm Minimization (HM)} objective, which realizes exactly the behavior contrasted with IM and Inf-Min in Fig.~\ref{fig:comparisonofInfluenceMax}: awareness that can both recede and recover, and seed selection that explicitly prioritizes individuals most susceptible to harm rather than merely maximizing total reach. Each individual is assigned a personal safety threshold, representing the minimum awareness required to be considered protected, and a vulnerability weight, reflecting the cost of falling below that threshold. HM minimizes the weighted, optionally superlinear, awareness shortfall across the population under a fixed seeding budget. Built directly on RASH's continuous, non-monotonic dynamics, HM naturally captures partial protection, recovery, and active suppression, capabilities fundamentally absent from binary, irreversible diffusion models and from classical Influence Maximization (IM) and Influence Minimization (Inf-Min) alike. We prove that the HM optimization problem is NP-hard, while its corresponding harm-reduction function is monotone and $\gamma$-weakly submodular, enabling a greedy algorithm with a bounded approximation guarantee.

In summary, the main contributions of this paper are as follows:

\begin{itemize}
    \item We propose RASH, a signed, susceptibility-aware diffusion model with continuous, bounded, non-monotonic node states that capture both reinforcement and suppression of awareness over time.

    \item We establish that the RASH spread function is monotone and $\gamma$-weakly submodular in networks containing only positive or only negative edges. Thus, although strict submodularity does not generally hold in fully signed networks, greedy optimization retains a provable approximation guarantee in these pure-sign regimes.

    \item We formulate the Harm Minimization (HM) objective, which minimizes vulnerability-weighted awareness shortfall relative to individual safety thresholds, thereby prioritizing protection of individuals at greater risk rather than treating all unreached individuals equally.
    
   \item We prove that HM problem is NP-hard via a reduction from Influence Maximisation under the Independent Cascade (IC) model, which is itself NP-hard through classical reductions from \textsc{Set Cover}. Its harm-reduction function is monotone and $\gamma$-weakly submodular, enabling a greedy algorithm with a provably bounded approximation ratio.

    \item We validate RASH and HM on synthetic and real-world signed temporal networks, demonstrating awareness reversal and suppression beyond classical and signed diffusion baselines, as well as seed allocations that differ substantially from those produced by classical Influence Maximization and Influence Minimization.
\end{itemize}

\section{Related work}
\label{sec:relatedWork}
In this section, we review the evolution of influence optimization research, focusing on classical Influence Maximization, emerging approaches for signed social networks Influence Minimization.

\subsection{Signed Social Networks}

Unlike classical social networks, which are modeled as a graph $G(V,E)$ carrying only positive associations, real-world networks inherently contain both trust and distrust relationships, motivating their representation as a signed graph $G(V,E_p,E_n)$ with disjoint positive and negative edge sets. Two representational conventions have been adopted in the literature: a single signed adjacency matrix with a polarity function mapping each edge to $\{-1,0,+1\}$ \cite{68b}, and an alternative that maintains separate adjacency structures for positive and negative links \cite{133b}. Building on this representation, several structural properties of unsigned networks have been re-examined under signed settings, including clustering coefficient, power-law degree distribution, transitivity, reciprocity, and similarity correlation \cite{PID,135b}, with empirical studies on real-world signed datasets confirming that positive links are denser, more reciprocal, and more transitive than negative ones \cite{136b,31b,35b,152b}. To formalize how such positive and negative ties collectively organize at the group level, two complementary theories have emerged: balance theory, which characterizes socially plausible triadic configurations based on the parity of negative edges \cite{13b}, and status theory, which instead ranks users hierarchically based on the direction and sign of their relationships \cite{69b}. These theoretical foundations have proven useful beyond structural analysis alone, informing tasks such as community detection through polarity-balance maximization \cite{109b}. Collectively, this body of work establishes signed networks as structurally and semantically distinct from their unsigned counterparts, a distinction that becomes especially consequential once we consider how influence itself propagates through such networks.

\subsection{Diffusion Models on Signed Networks}

Building on classical diffusion formulations for unsigned graphs, which propagate only positive influence \cite{4a}, several models have been proposed to capture how both trust and distrust jointly shape information spread. The personal information diffusion model represents each message as carrying fixed absolute properties but relationship-dependent relative properties, with nodes transitioning through unknown, insider, publisher, and forgetter states based on the polarity of the link along which the message arrives \cite{PID}. The signed linear threshold model extends the classical threshold rule by allowing each node to accumulate separate positive and negative influence from its active neighbors, becoming positive- or negative-active only once the net signed influence exceeds a threshold, with polarity fixed once activation occurs \cite{LTE}. Similarly, the signed independent cascade model generalizes the classical single-chance activation rule by making each active node attempt to activate its neighbors with a state matching the polarity of the connecting edge, allowing positive and negative activation to compete across the network \cite{ICE}. In contrast to these state-based models, polarity-related linear influence diffusion avoids Monte Carlo simulation altogether, instead propagating separate positive and negative influence probabilities through linear iterations via damping factors, yielding a more computationally efficient but continuous, rather than discretely-activated, notion of influence \cite{PLID}. Across these models, a common pattern emerges: activation is treated as a discrete, largely irreversible event, and the interaction between positive and negative influence is resolved either by state competition or by a fixed threshold, rather than by modeling how a node's underlying vulnerability or awareness evolves continuously and reversibly over time. This gap in modeling bounded, risk-sensitive, and reversible node behavior under signed diffusion directly motivates the RASH model proposed in this work.

\subsection{Influence Maximization}

Building on these diffusion mechanics, a large body of work has studied how to select seed nodes that maximize propagated influence. Kempe et al. first formalized Influence Maximization (IM) as an NP-hard combinatorial optimization problem and proposed a greedy $(1-\tfrac{1}{e})$-approximation algorithm exploiting submodularity \cite{4a}, with CELF later improving its computational efficiency \cite{11a}. Since approximation guarantees came at high computational cost, subsequent heuristic and sampling-based methods, such as SIMPATH \cite{16a} and IMM \cite{18a}, traded theoretical guarantees for scalability. More recently, learning-based approaches have reframed seed selection using graph representation learning: Khalil et al.'s S2V-DQN combined structure2vec embeddings with deep Q-learning \cite{6a}, while later GNN-based frameworks such as PIANO \cite{32a}, ToupleGDD \cite{33a}, an end-to-end diffusion model DeepIM \cite{34a} and the forecasting based method of seed selection, HoTLink \cite{za2026forecast},  improved both scalability and generalization across graph structures. Beyond the canonical formulation, several variants extend IM toward more targeted or protective objectives, including targeted activation between a source and a specific user \cite{45a} and targeted protection maximization, which instead blocks minimal edges to shield users from harmful influence such as rumors \cite{46a}, foreshadowing the shift from maximizing spread to controlling and containing it.

While the above formulations assume purely cooperative, positive-only relationships, real-world social networks inherently contain both trust and distrust \cite{114b,124b,128b}, under which negative ties can actively suppress or reverse influence, rendering classical IM inapplicable. Ignoring negative relations, as in the original greedy framework, systematically overestimates positive influence spread \cite{4a}, motivating the signed voter-model-based SVIM framework, which extends diffusion to two opposing opinions across balanced, anti-balanced, and unbalanced structures \cite{79b}. Subsequent signed IM research has organized around three strategies trading off guarantees against scalability: simulation-based methods such as the polarity-aware PRIM \cite{76b} and Greedy-SIM \cite{85b}, which preserve submodularity and monotonicity under extended IC-type models at high simulation cost; heuristic methods that rank nodes via centrality or influence scores, such as signed PageRank \cite{149b} and eigenvector-based stubborn-spreader identification \cite{98b}, sacrificing guarantees for efficiency; and sampling-based methods, including reverse-reachable-set techniques like COSiNeMax for maximizing contrasting opinions \cite{102b}, which seek a middle ground between the two. Collectively, these signed IM methods reveal a persistent tension between guarantee-preserving but costly simulation and efficient but heuristic seed selection, motivating models that jointly capture polarity-aware dynamics without sacrificing tractability.

\subsection{Influence Minimization}

The rise of signed networks reshapes not only how influence is maximized but also how it must be contained, since negative edges can actively counteract positive spread and generate interference patterns that unsigned diffusion models cannot capture. This duality introduces new imperatives for influence minimization: protecting vulnerable nodes from negative exposure and disrupting distrust pathways become as important as identifying influential seeds. Most negative influence minimization strategies rely on proactive structural interventions such as vertex blocking, motivated by the practical feasibility of modifying network structure to curb diffusion \cite{25imin,13imin,16imin,20imin}, with candidate vertices identified via degree-based centrality \cite{6imin}, betweenness and out-degree measures \cite{24imin}, or greedy heuristic selection \cite{17imin,22imin}. Beyond vertex-level interventions, alternative containment strategies include edge blocking to sever key propagation channels \cite{5imin} and seeding competing or corrective campaigns, such as countering rumors with verified information \cite{1imin,8imin,15imin}, with more recent formulations further incorporating user experience \cite{16imin} and the temporal evolution of user opinions \cite{12imin} directly into the propagation dynamics.

System-level solutions have also emerged alongside these algorithmic strategies: the Iminimize platform operationalizes vertex-blocking under budget constraints through an interactive interface suited for real-world deployment \cite{teng2023iminimize}, while research on cross-platform misinformation control highlights entity-protection mechanisms that account for users maintaining presences across multiple networks, which otherwise amplifies contagion beyond a single platform \cite{jiang2023deep}. This line of work traces back to Domingos et al., who first modeled influence relationships between users \cite{18ai}, and Kempe et al., who cast such dynamics as the influence maximization problem \cite{4a}; building on the inverse of this formulation, Fan et al. introduced least-cost rumor blocking, seeking a minimal set of protector nodes to curb a rumor's negative impact \cite{5ai}. Subsequent work extended this along several dimensions, incorporating user experience constraints \cite{6ai}, addressing multiple simultaneous rumor cascades via the HISBM model \cite{16ai,17ai}, and solving tree-restricted rumor minimization via dynamic programming \cite{7ai}. 

Across both maximization and minimization, however, existing signed models treat activation as a discrete, largely irreversible event, leaving open the question of how influence should be maximized or contained when node vulnerability and awareness evolve continuously and reversibly, a gap this work directly addresses through RASH and Harm Minimization objective.

\section{Preliminaries}
\label{Sec:Prelims}

This section introduces the basic definitions and main notations  used in our paper. The notations are summarized in Table~\ref{tab:notation}.

\begin{table}[t]
\centering
\caption{Summary of notation used throughout the paper.}
\label{tab:notation}
\footnotesize
\setlength{\tabcolsep}{4pt}
\renewcommand{\arraystretch}{1.1}
\begin{tabularx}{\columnwidth}{@{}l >{\raggedright\arraybackslash}X@{}}
\toprule
\textbf{Symbol} & \textbf{Description} \\
\midrule
$G=(V,E^+,E^-,a)$        & Signed  network with positive/negative edges \\
$a_{ji}$                 & Influence strength from node $j$ to node $i$ \\
$\Gamma_i^+, \Gamma_i^-$ & Positive and negative incoming neighbors of node $i$ \\
$T$                      & Diffusion horizon \\
$S, A$                   & Seed set ($S$: generic/IM context; $A$: HM context) \\
$x_i^S(t)$               & Awareness state of node $i$ at time $t$ under seed set $S$; written $x_i^A(t)$ in the HM context \\
$\sigma(S)$              & Influence spread of seed set $S$ at horizon $T$  \\
$b_i$                    & Personal safety threshold of individual $i$ \\
$w_i$                    & Vulnerability weight of individual $i$ \\
$s_i(A)$                 & Awareness shortfall of individual $i$ under seed set $A$ \\
\bottomrule
\end{tabularx}
\end{table}

\paragraph{Signed Network Model.} We represent a social network as a
directed signed graph $G = (V, E^+, E^-, a)$, where $V$ is the set of
nodes, and $E^+$ and $E^-$ denote the sets of supportive and discouraging
directed edges, respectively. For an edge $(j,i)$, the scalar $a_{ji} > 0$
denotes the strength of influence from node $j$ to node $i$. We write
$\Gamma_i^+$ and $\Gamma_i^-$ for the sets of positive and negative
incoming neighbors of node $i$, i.e., $\Gamma_i^+ = \{j : (j,i) \in E^+\}$
and $\Gamma_i^- = \{j : (j,i) \in E^-\}$. Unlike unsigned graphs, which
admit only $E^+$, this representation allows a single node to
simultaneously receive reinforcing and conflicting signals from different
neighbors, a distinction that is central to everything that follows.

\paragraph{Node States, Seed Sets, and Diffusion Horizon.} Each node
$i \in V$ carries an awareness state that evolves over discrete time steps
$t = 0, 1, \dots, T$, where $T$ is the diffusion horizon. Existing signed 
diffusion models \cite{ICE}\cite{PLID}\cite{PID}\cite{LTE} restrict
this state to a binary or ternary set of discrete values (e.g.,
active/inactive, or positive-active/negative-active/inactive) and treat
activation as irreversible once triggered; we retain this general,
time-indexed notion of a node state without committing to a specific
state space or update rule. A
seed set $S \subseteq V$ denotes the set of nodes initialized with
maximal awareness at $t=0$; we write the resulting, seed-set-dependent
trajectory of node $i$ as $x_i^S(t)$, using the seed set as a superscript
throughout the paper whenever its identity matters. In the Harm
Minimization context introduced next, we use $A \subseteq V$ in the same
role in place of $S$, as noted in Table~\ref{tab:notation}.

\paragraph{Safety and Vulnerability} To motivate
the Harm Minimization (HM) objective, we fix the the notion of safety and vulnerability of individuals, which recur throughout the rest of the paper. Each individual $i \in V$ is associated 
with a personal safety threshold $b_i \in (0,1]$, the minimum awareness
level required to protect them from harm, and a nonnegative vulnerability
weight $w_i \geq 0$, the severity of harm they experience per unit of
awareness shortfall; below $b_i$, individual $i$ is considered
under-informed and at risk, and higher $w_i$ signals that leaving $i$
uninformed is more costly. Both quantities are left abstract here and
instantiated concretely from the structure of the signed graph in
Section~\ref{sec:proposed-work}. Given these, we define a awareness shortfall
that is based on these two parameters :

\begin{definition}[Awareness Shortfall]
Given a seed set $A \subseteq V$, the awareness shortfall of individual
$i$ at the terminal time $T$ is
\[
s_i(A) = \max\{0,\; b_i - x_i^A(T)\},
\]
which is zero once $i$ has received sufficient awareness
($x_i^A(T) \geq b_i$) and strictly positive when they remain
under-protected.
\end{definition}

\section{Proposed Work}
\label{sec:proposed-work}

Having established in Section~\ref{sec:relatedWork} that existing signed
diffusion models treat activation as a discrete, irreversible event, we
now introduce a diffusion mechanism that removes this restriction. We
first define the Risk-Aware Signed Heterogeneous (RASH) diffusion model,
which replaces binary activation with a continuous, reversible awareness
state governed by competing signed influence. We then show that RASH's continuous, reversible dynamics make it
possible to define a harm-sensitive optimization objective directly on
top of it, which we formalize as the Harm Minimization (HM) problem. Finally, we establish the theoretical
properties---monotonicity, weak submodularity, and computational
hardness---that make both influence control and harm minimization under
RASH tractable via greedy approximation.

\subsection{RASH: Risk-Aware Signed Heterogeneous Diffusion Model}
\label{sec:rash}

Real-world diffusion is rarely a matter of pure reinforcement. An
individual embedded in a signed network is simultaneously exposed to
supportive voices that encourage adoption and discouraging voices that
suppress it, and their response to this tension depends on intrinsic
traits such as susceptibility to negative information. RASH is designed
to capture exactly this interplay by modeling awareness as a continuous
quantity that can grow, decay, or reverse under competing influence,
rather than as a one-time, irreversible switch.

We consider a directed signed network $G = (V, E^+, E^-, a)$, in which
each node $i \in V$ is associated with a continuous awareness state
$x_i(t) \in [-1,1]$, where positive values indicate supportive awareness,
negative values indicate discouraging belief, and the magnitude reflects
intensity. Because the dynamics below hold for an arbitrary
initialization, we write this state simply as $x_i(t)$ for now; once a
specific seed set determines that initialization, we reinstate the notation $x_i^S(t)$ fixed in Table~\ref{tab:notation}. We build up the model in five
steps, from the raw signed input a node receives to its updated
awareness. 

\begin{comment}
\begin{figure*}[!htbp]
    \centering
    \includegraphics[width=\linewidth]{Figures/rashprop (1).pdf}
    \caption{Block-level architecture of the RASH diffusion framework. Supportive and discouraging inputs are aggregated using susceptibility-aware weighting, followed by nonlinear normalization and threshold-based activation to update node awareness.}
    \label{fig:rash_block}
\end{figure*}
\end{comment}

\paragraph{Signed influence aggregation} The first step is to aggregate the net influence on a node. A node's positive and negative
neighbors should not be allowed to cancel each other out before the node
has had a chance to weigh them individually: a node with one strong
supporter and one strong detractor is in a meaningfully different
position from a node with no neighbors at all, even though a naively
signed sum would treat both as ``balanced.'' RASH therefore aggregates
the two signals separately. At each time step $t$, node $i$ receives two
separate influence signals from its positive and negative incoming
neighbors $\Gamma_i^+$ and $\Gamma_i^-$:
\begin{equation}
P_i(t) = \sum_{j \in \Gamma_i^+} a_{ji}\, x_j(t),
\label{eq:positive-influence}
\end{equation}
\begin{equation}
N_i(t) = \sum_{j \in \Gamma_i^-} a_{ji}\, x_j(t).
\label{eq:negative-influence}
\end{equation}
$P_i(t)$ aggregates reinforcing signals, while $N_i(t)$ aggregates
opposing or harmful signals; critically, computing these independently
means a node's exposure to support and discouragement never cancels
prematurely at the source.

\paragraph{Susceptibility-based aggregation} The second step is to incorporate the vulnerability factor in updating the awareness of a node. Two nodes receiving
identical $P_i(t)$ and $N_i(t)$ can still respond very differently
depending on how vulnerable each is to discouragement, so a single
combined signal cannot yet be formed without first accounting for this
difference---individuals do not weigh supportive and discouraging signals
equally. We introduce a susceptibility parameter $\lambda_i \in [0,1]$
for each node $i$, where a higher $\lambda_i$ indicates greater
vulnerability to discouraging influence, and combine the two signals as
\begin{equation}
I_i(t) = (1-\lambda_i)\,P_i(t) + \lambda_i\, N_i(t).
\label{eq:aggregated-influence}
\end{equation}
This single parameter lets RASH interpolate continuously between
resilient nodes ($\lambda_i \to 0$, dominated by support) and highly
susceptible nodes ($\lambda_i \to 1$, dominated by discouragement),
rather than forcing every node through identical dynamics.

\paragraph{Nonlinear normalization} The third step is to bound the influence and represent diminishing sensitivity. Left unconstrained, $I_i(t)$ can grow
arbitrarily large simply because a node has many or strongly-weighted
neighbors, which would make the update rule reflect network density
rather than genuine social pressure; a node's sensitivity to additional
influence should also diminish once that influence is already extreme.
To keep influence bounded and represent this diminishing sensitivity, the
aggregated influence is passed through a hyperbolic tangent
transformation:
\begin{equation}
\tilde{I}_i(t) = \tanh\big(I_i(t)\big) \in (-1,1),
\label{eq:normalization}
\end{equation}
which is zero-centered and treats positive and negative influence
symmetrically, so that neither sign is structurally privileged in the
dynamics.

\paragraph{Risk-aware thresholding} The fourth step models how much
pressure is needed before a node actually changes its mind. Not every
fluctuation in influence should perturb a node's belief; if it did,
ordinary noise in $\tilde I_i(t)$ would make awareness trajectories
unrealistically volatile. A node should instead update only once the
pressure it feels is strong enough to matter. Each node $i$ therefore
carries a resistance threshold $\theta_i \in (0,1)$, and its awareness
updates only when the magnitude of normalized influence exceeds this
threshold, $|\tilde I_i(t)| \geq \theta_i$. When this condition holds,
the node is influenced positively if $\tilde I_i(t) > 0$ and negatively
if $\tilde I_i(t) < 0$; otherwise it remains unaffected. The result is that most
nodes, at most time steps, do not update at all, their state changes
only at the specific moments when signed influence is genuinely strong,
rather than at every minor shift in their neighborhood.

\paragraph{Propagation rule} The final step is how the awareness of a node is updated.  A node's belief should not be overwritten
by new influence at every step; it should evolve gradually, retaining
memory of its prior state while remaining open to sufficiently strong new
pressure. The awareness update enforces exactly this balance, combining
memory of the prior state with any new influence that clears the
threshold:
\begin{equation}
x_i(t+1) =
\begin{cases}
(1-\alpha_i)\,x_i(t), & |\tilde I_i(t)| < \theta_i, \\[4pt]
(1-\alpha_i)\,x_i(t) + \alpha_i\, \tilde I_i(t), & \text{otherwise},
\end{cases}
\label{eq:propagation-rule}
\end{equation}
where $\alpha_i \in (0,1)$ is an adaptation rate controlling how quickly
node $i$ updates its state. The term $(1-\alpha_i)x_i(t)$ preserves
persistence, while $\alpha_i \tilde I_i(t)$ injects new influence;
because both reinforcement and degradation act through the same rule,
awareness that has grown under supportive influence can later be eroded
by discouraging influence, a behavior that is structurally impossible
under any signed activation models.  Fig.~\ref{fig:rash_node} illustrates all these steps and mechanism.

\paragraph{Parameterization} Both $\lambda_i$ and $\alpha_i$ should be
estimable directly from the network, rather than requiring hand-tuned
metadata that is rarely available for real signed graphs. A practical
choice for susceptibility $\lambda_i$ is
\begin{equation}
\lambda_i = \frac{|\Gamma_i^-|}{|\Gamma_i^+| + |\Gamma_i^-|},
\label{eq:lambda-estimate}
\end{equation}
capturing a node's structural exposure to discouraging interactions,
i.e., the fraction of $i$'s incoming neighbors that are discouraging.
The adaptation rate $\alpha_i$ is estimated analogously, as a node's
total incoming degree relative to the largest possible incoming degree
in the network i.e.,
\begin{equation}
\alpha_i = \frac{|\Gamma_i^+| + |\Gamma_i^-|}{|V| - 1},
\label{eq:alpha-estimate}
\end{equation}
where $|V|-1$ is the maximum incoming degree attainable by any node in
$G$. This reflects the empirical observation that highly connected
nodes tend to update more conservatively in response to any single
signal: as $|\Gamma_i^+| + |\Gamma_i^-|$ grows relative to $|V|-1$,
$\alpha_i$ approaches $1$, but the influence any single neighbor
contributes to $I_i(t)$ still shrinks in relative terms, since it is
one term among many in the sum defining $P_i(t)$ and $N_i(t)$ 
(Eqs.~\eqref{eq:positive-influence}--\eqref{eq:negative-influence}).

Fig.~\ref{fig:rash_node} walks through this pipeline on a toy signed
network, showing one full update step of the RASH Diffusion Engine from
time stamp $t$ to time stamp $t{+}1$. The input is a risk/sign matrix
recording, for each ordered pair of users, whether their relationship is
supportive ($+$), discouraging ($-$), or absent, together with the
associated edge weight $a_{ji}$; this matrix is what determines each
node's positive and negative incoming neighborhoods $\Gamma_i^+$ and
$\Gamma_i^-$. Starting from this signed structure and each node's
awareness at time $t$, the figure traces the five steps of the pipeline
in order and update the state of unaware node at $t{+}1$. Nodes are color-coded throughout as
unaware, positively aware, or negatively aware to  show how node's classification can shift between the two time
stamps once its updated awareness crosses zero.

\begin{figure*}[!htbp]
    \centering
    \includegraphics[width=\linewidth]{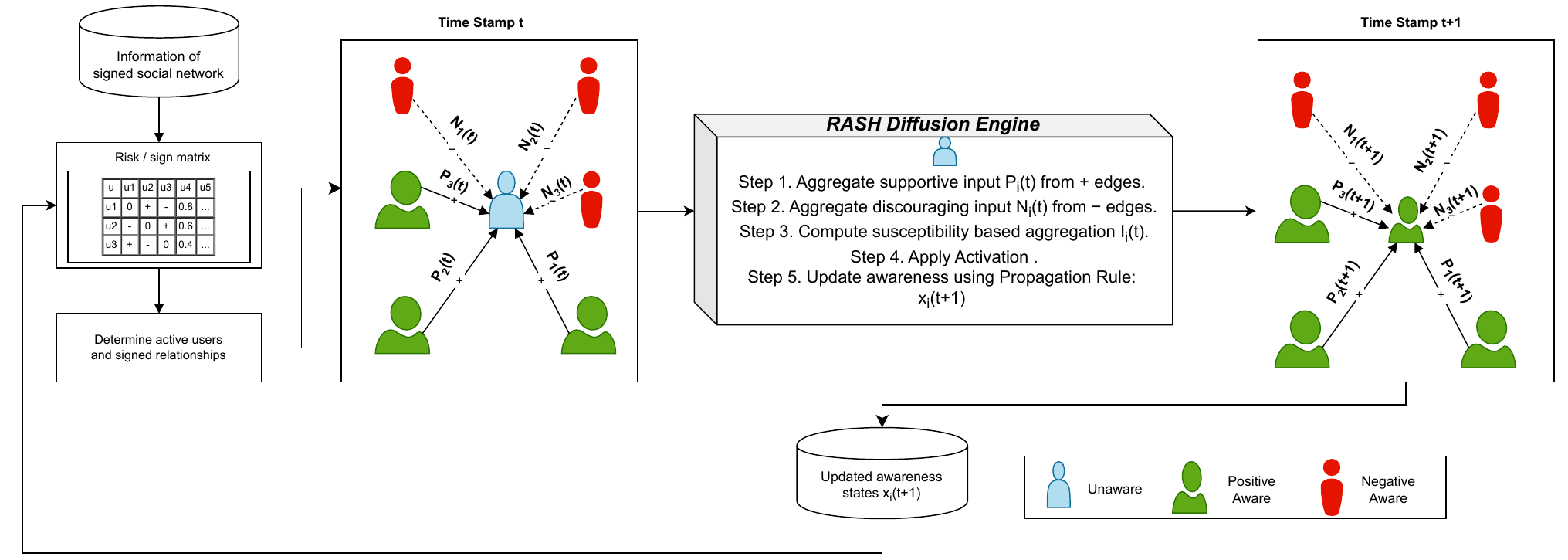}
    \caption{Node-level illustration of the RASH propagation mechanism. A node receives supportive (solid) and discouraging (dashed) influence from neighbors, which are aggregated, normalized, and used to update awareness. \footnotesize\textit{Note: the adjacency matrix shown in the figure is used solely for illustrative purposes to represent the sign and risk information, and does not correspond to the exact adjacency matrix of any real-world social network.}}
    \label{fig:rash_node}
\end{figure*}

\subsection{Harm Minimization (HM)}
\label{sec:hm}

The conventional influence objectives optimize aggregate propagation, but they do not distinguish between individuals according to their susceptibility to harm or the level of awareness required for their protection. RASH enables a different perspective by coupling continuous, signed diffusion with an objective that explicitly accounts for individual-level risk.

\subsubsection{From Influence to Harm Minimization}

Because the diffusion dynamics and the seed-selection objective are
separable in RASH (Section~\ref{sec:rash}), the classical influence
objectives can be recovered as special cases. For a seed set $S\subseteq
V$ and diffusion horizon $T$, let
\begin{equation}
\sigma(S)=\sum_{i\in V}x_i^S(T)
\end{equation}
denote the aggregate terminal awareness under the RASH dynamics. In the
positive regime (with only positive edges), obtained by setting
$\lambda_i=0$ for all $i$, the resulting spread function $\sigma^+$
gives the conventional Influence Maximization (IM) problem,
\begin{equation}
\max_{S\subseteq V,\;|S|\leq k}\sigma^+(S).
\end{equation}

Conversely, the negative regime (with only negative edges), obtained by
setting $\lambda_i=1$ for all $i$, yields the influence-reduction
setting. Here the goal is not to seed awareness but to place a blocking
set that suppresses it, so the natural objective is not the raw spread
$\sigma^-(S)$ itself but how much that spread is \emph{reduced} relative
to seeding nothing at all: a blocking set is good precisely when
negative spread under it is much smaller than the negative spread that
would occur unblocked. If $\sigma^-(S)$ denotes the resulting negative
influence under blocking set $S$, this reduction relative to the
unblocked baseline $\sigma^-(\emptyset)$ is
\begin{equation}
\varrho(S)=\sigma^-(\emptyset)-\sigma^-(S),
\end{equation}
and Influence Minimization selects the blocking set of size at most $k$
maximizing this reduction,
\begin{equation}
\max_{S\subseteq V,\;|S|\leq k}\varrho(S).
\end{equation}

Both objectives, however, evaluate a campaign primarily through
aggregate influence. Consequently, they implicitly treat the value of
reaching different individuals as homogeneous. This assumption is
inadequate for protective campaigns such as public-health warnings,
cybersecurity advisories, and crisis communications, where failing to
protect a highly vulnerable individual can be substantially more
consequential than failing to reach an otherwise low-risk individual.

We therefore formulate Harm Minimization (HM) around the residual
protection deficit of each individual rather than aggregate reach. Let
$b_i$ denote the minimum awareness required for node $i$ to be
considered adequately protected (safety threshold), and let $w_i$
quantify the vulnerability of that node. Given a seed set $A$, the
resulting awareness shortfall at the end of the diffusion horizon is
\begin{equation}
s_i(A)=\max\{0,b_i-x_i^A(T)\}.
\label{eq:awareness-shortfall}
\end{equation}
This shortfall is zero whenever node $i$'s terminal awareness
$x_i^A(T)$ reaches or exceeds its threshold $b_i$, meaning $i$ has
received enough support to be considered protected regardless of how
far above $b_i$ it ends up; it is strictly positive, and grows linearly
with the gap $b_i - x_i^A(T)$, whenever $i$ falls short, meaning partial
progress toward $b_i$ is credited but only full attainment removes the
deficit entirely. Unlike $b_i$ and $w_i$, which are node-level
characteristics fixed before seed selection, $s_i(A)$ depends directly
on the selected seed set, since $x_i^A(T)$ is itself the outcome of
running RASH from $A$: two different seed sets can leave the same node
with two different shortfalls, even though its threshold $b_i$ never
changes.

The Harm Minimization problem then asks for the seed set that leaves
the population with the least total, vulnerability-weighted shortfall.
Formally, given a budget $k$, HM selects
\begin{equation}
\min_{A \subseteq V,\, |A| \le k} \; \sum_{i \in V} w_i \, \big(s_i(A)\big)^p,
\label{eq:hm-problem}
\end{equation}
for exponent $p \ge 1$. Each term in this sum penalizes exactly one
individual's remaining deficit, scaled by how costly that individual's
neglect is; summing over $i \in V$ means the objective is sensitive not
just to how many individuals remain under-protected, but to how severely
and how expensively each one does. A seed set that reaches many
low-vulnerability nodes while leaving a few high-vulnerability nodes far
below their threshold can therefore score worse under
Eq.~\eqref{eq:hm-problem} than a seed set that reaches fewer nodes
overall but closes the gap for those who need it most---precisely the
behavior aggregate-reach objectives like IM and Inf-Min cannot express,
since neither $\sigma^+(S)$ nor $\varrho(S)$ distinguishes which
individuals were reached, only how many. HM therefore evaluates a
campaign according to how effectively it reduces protection deficits
across the population, not according to how much awareness it produces
in total.

\subsubsection{Risk-Aware Protection Requirements}

The vulnerability weight should be consistent with the mechanism
through which RASH generates discouraging influence. Since $\lambda_i$
already quantifies node $i$'s susceptibility to negative influence, we
use it to model the vulnerability weight and keep $w_i = \lambda_i$ for
the sake of simplicity. This coupling ensures that individuals who are
more susceptible to discouraging influence receive greater importance
in the harm objective. The protection requirement $b_i$ captures the level of awareness node
$i$ must reach before being considered safe, and this requirement
should not be uniform across the population: a node already surrounded
by discouraging influence, or one with few alternative channels through
which to receive or recover awareness, needs more of it before it can
be regarded as protected. We derive $b_i$ from exactly these two
complementary structural risk factors, exposure to negative influence
and structural isolation, each computed directly from the positive and
negative neighborhoods $\Gamma_i^+$ and $\Gamma_i^-$, so that no
metadata beyond the signed graph itself is required. The first factor
is already at hand: it is exactly the susceptibility term $\lambda_i$
used above for $w_i$, since $\lambda_i$ is itself the fraction of $i$'s
incoming neighbors that are discouraging. The second factor requires a
new quantity, the \emph{structural isolation} $\iota_i$, measuring $i$'s
incoming degree relative to the largest incoming degree in the network,
subtracted from $1$ so that sparsely connected nodes receive a higher
value:
\[
\iota_i=
1-
\frac{|\Gamma_i^+|+|\Gamma_i^-|}
{\displaystyle\max_{j\in V}
\left(|\Gamma_j^+|+|\Gamma_j^-|\right)}.
\]
The two quantities capture distinct and independent notions of risk:
$\lambda_i$ reflects how much of a node's \emph{existing} signal is
already working against it, while $\iota_i$ reflects how few
\emph{alternative} structural channels remain through which awareness
could still be acquired or recovered if the primary campaign path fails
to reach it.A node can score highly on the negative-exposure dimension
($\lambda_i$) without scoring highly on the structural-isolation
dimension ($\iota_i$), or vice versa, so neither factor alone is
sufficient to characterize its risk. Reusing $\lambda_i$ here, rather than introducing
a separate exposure term, is deliberate: it means $b_i$ and $w_i$ are
not two independently chosen quantities but share a common structural
ingredient by construction, so a node's susceptibility shapes both how
urgently it must be protected and how costly it is to fail.

We combine the two through a convex combination and map the resulting
risk score onto a bounded protection threshold:
\begin{equation}
b_i=
b_{\min}+
(b_{\max}-b_{\min})
\left[
\beta\,\lambda_i+(1-\beta)\,\iota_i
\right],
\qquad
\beta\in[0,1],
\label{eq:b_i}
\end{equation}
where $0 < b_{\min} \leq b_{\max} \leq 1$ are fixed constants ensuring
every node's required awareness level stays within a nonzero lower and
bounded upper limit, and $\beta$ controls the relative weight given to
negative exposure versus structural isolation. Under this construction,
nodes that are more exposed to negative influence, more structurally
isolated, or both, are assigned a higher $b_i$, and therefore require a
higher level of terminal awareness before being regarded as adequately
protected.
\subsubsection{Need-Weighted Harm Objective}

Given the node-specific vulnerability $w_i$ and protection threshold $b_i$, we define the total harm associated with a seed set $A$ by aggregating individual awareness shortfalls:

\begin{equation}
H_p(A)
=
\sum_{i\in V}
w_i\left(s_i(A)\right)^p,
\qquad p\geq1.
\label{eq:need-weighted-harm}
\end{equation}

The exponent $p$ controls how strongly the objective penalizes concentrated shortfalls. When $p=1$, harm grows linearly with the protection deficit. Increasing $p$ places progressively greater emphasis on individuals experiencing severe shortfalls, thereby discouraging solutions that achieve good aggregate protection while leaving a small number of individuals substantially under-protected.

The resulting Harm Minimization problem is

\begin{equation}
A^*
=
\arg\min_{A\subseteq V,\;|A|\leq k}
H_p(A).
\label{eq:hm-objective}
\end{equation}

Thus, rather than maximizing the total amount of information propagated or minimizing aggregate negative influence, HM selects seeds to minimize the weighted residual protection deficit of the population. The seed budget $k$ determines the available intervention capacity, while $w_i$, $b_i$, and $p$ determine how that capacity is distributed across individuals with different levels of vulnerability and protection requirements.

This formulation also provides a direct policy interpretation of $p$. The case $p=1$ corresponds to proportional penalization of individual shortfalls, whereas $p>1$ increasingly prioritizes the most severely under-protected individuals. In the limiting regime $p\rightarrow\infty$, the objective increasingly approaches minimization of the largest awareness shortfall $
s_i(A)$, providing a strong worst-case protection criterion.

\begin{table}[t]
\centering
\caption{Comparison of seed-selection objectives under RASH.}
\label{tab:objective-comparison}
\footnotesize
\setlength{\tabcolsep}{3pt}
\renewcommand{\arraystretch}{1.2}
\begin{tabularx}{\columnwidth}{@{}l >{\raggedright\arraybackslash}X >{\raggedright\arraybackslash}X@{}}
\toprule
\textbf{Objective} & \textbf{Optimization goal} & \textbf{Primary criterion} \\
\midrule
Influence Maximization
& Maximize positive influence
& Aggregate awareness/reach \\

Influence Minimization
& Maximize reduction in negative influence
& Aggregate negative influence \\

Harm Minimization
& Minimize $\sum_{i} w_i(s_i(A))^p$
& Individual protection deficits \\
\bottomrule
\end{tabularx}
\end{table}

\subsubsection{Role of RASH in Harm Minimization}

The HM objective is intrinsically tied to the properties of the RASH diffusion process. First, RASH produces continuous awareness states, allowing the shortfall $s_i(A)$ to distinguish between partial and severe under-protection. This information would be lost in a purely binary activation model. Second, RASH permits reversible and non-monotonic awareness dynamics, allowing a node's awareness to decrease as a consequence of discouraging influence. Consequently, HM can capture not only the failure to reach a vulnerable individual but also the possibility that an intervention leaves that individual worse off. Third, the definition $w_i=\lambda_i$ directly couples individual vulnerability in the objective with susceptibility to discouraging influence in the diffusion process.

These properties establish RASH as the diffusion model underlying HM:
RASH determines how awareness evolves, while the HM objective
determines how seed selection should account for heterogeneous
protection requirements and vulnerability. The resulting formulation
therefore shifts the goal of influence-based intervention from not only
maximising aggregate reach but also minimising the residual harm
experienced across the population.

\subsection{Properties of the Harm Minimization Objective under RASH}
\label{sec:hm-props}

Having defined RASH and the HM objective, we now establish that both
admit tractable approximation guarantees, despite the added complexity
introduced by signed, non-monotonic dynamics. We establish these
guarantees in the positive regime of RASH ($\lambda_i = 0$ for all
$i$, so that $I_i(t) = P_i(t)$), together with $a_{ji} \ge 0$ for all
$i,j$ and $x_i^S(0) \in [0,1]$ for all $i, S$. By symmetry, replacing
every occurrence of $P_i$ with $N_i$ and $\Gamma_i^+$ with
$\Gamma_i^-$ throughout the definitions and proofs below, an entirely
analogous argument yields matching guarantees in the negative
regime.

The fully general signed regime, in which $\lambda_i$ varies freely
across $(0,1)$, is precisely where strict submodularity fails,
and it is worth seeing concretely why, rather than taking this on
faith, since it motivates why a weaker guarantee is what we
prove below.

\begin{example}[Failure of Strict Submodularity]
\label{ex:submod-failure}
Let node $i$ have two positive incoming neighbors $v_1, v_2$, with
$a_{v_1 i} = a_{v_2 i} = 0.4$, resistance threshold $\theta_i = 0.5$,
adaptation rate $\alpha_i = 1$, and $x_i(0) = 0$. Take $A = \emptyset$
and $B = \{v_2\}$, so $A \subseteq B$, and consider adding $v_1$ to
each.

Adding $v_1$ to $A$: with only $v_1$ seeded, $P_i(0) = a_{v_1 i}\cdot 1
= 0.4$, so $\tilde I_i(0) = \tanh(0.4) \approx 0.38 < \theta_i$. The
threshold is not cleared, so $x_i(1) = x_i(0) = 0$: the marginal gain
of adding $v_1$ to $A$ is exactly $0$.

Adding $v_1$ to $B$: with both $v_1$ and $v_2$ seeded, $P_i(0) =
0.4+0.4 = 0.8$, so $\tilde I_i(0) = \tanh(0.8) \approx 0.66 \ge
\theta_i$. The threshold is now cleared, and with $\alpha_i=1$,
$x_i(1) = \tilde I_i(0) \approx 0.66$: the marginal gain of adding
$v_1$ to $B$ is approximately $0.66$.

The marginal gain of adding $v_1$ therefore \emph{increased} when the
seed set grew from $A=\emptyset$ to $B=\{v_2\} \supseteq A$, the exact
reverse of the diminishing-returns behavior strict submodularity
requires. This complementarity arises directly from the threshold
gate: whether $v_1$'s contribution matters at all depends on whether
other already-selected seeds have pushed $i$ close enough to
$\theta_i$ for $v_1$ to tip it over. In the fully signed regime, the
same mechanism interacts with sign: since $P_i(t)$ and $N_i(t)$
combine into a single gated quantity $I_i(t)$ before the threshold
test (Eq.~\eqref{eq:aggregated-influence}), a positive seed's marginal
effect on $i$ depends on how much discouraging pressure other selected
seeds are simultaneously exerting on $i$, a dependency with no reason
to be monotonically diminishing, and one that strict submodularity
cannot accommodate.
\end{example}

We therefore prove a relaxed, multiplicative version of diminishing
returns instead ($\gamma$-weak submodularity, Section~\ref{sec:hm-props}
below), which still suffices for the greedy algorithm to carry a provable approximation
guarantee. We build the argument in three stages: a coupling lemma
bounding how a node's response to a small seed set relates to its
response to a larger one, monotonicity of harm reduction, and
$\gamma$-weak submodularity itself.

\subsubsection{Setup and Notation}

Since minimizing harm and maximizing harm \emph{reduction} are
equivalent problems over the same seed sets, it is more convenient to
argue in terms of the latter, which behaves like a monotone gain
function in the sense familiar from submodular optimization. We
therefore work throughout with:
\begin{itemize}
  \item $R_p(A) := H_p(\emptyset) - H_p(A)$: the harm reduction achieved
  by seed set $A\subseteq V$, relative to seeding nothing.
  \item $H_p(A) := \sum_{i\in V} w_i\,(s_i(A))^p$: total harm under $A$,
  as defined in Eq.~\eqref{eq:need-weighted-harm}; $w_i\ge0$ is node
  $i$'s vulnerability weight and $p\ge1$ the harm exponent.
  \item $s_i(A) := \max\{0,\,b_i - x_i^A(T)\}$: the shortfall of node
  $i$ at horizon $T$ relative to threshold $b_i$.
\end{itemize}
Because $H_p(\emptyset)$ does not depend on $A$, minimizing $H_p(A)$
over $|A|\le k$ is equivalent to maximizing $R_p(A)$ over $|A|\le k$;
all guarantees are stated for $R_p$ and translate directly to
$H_p$.

To bound how much a node's
state can change, $|x_i^{S}(t{+}1)|$, in terms of how much its
input changed, $|I_i^S(t)|$. Under the piecewise propagation
rule of Eq.~\eqref{eq:propagation-rule}, this is not possible in
general: the rule is discontinuous exactly at $|\tilde I_i(t)| =
\theta_i$, so an arbitrarily small change in input can push a node
from "unaffected" to "fully updated," producing an output change with
no bound in terms of the input change at all. To rule this out, we
replace the hard gate with a smoothed one and require it to be
continuously differentiable. This lets us to invoke the
Mean Value Theorem: for a $C^1$ function $F_i$ and any $0 \le a
\le b$, the theorem guarantees $F_i(b) - F_i(a) = F_i'(\xi)\,(b-a)$ for
some $\xi \in (a,b)$, so a bound on $F_i'$ over the relevant range
translates immediately into a bound on the difference $F_i(b)-F_i(a)$
itself, exactly the tool needed to control $|x_i^S(t{+}1)|$ from a
bound on $|I_i^S(t)|$.

\begin{assumption}[Lipschitz Threshold Gate]
\label{ass:gate}
The activation update is
\begin{equation}
  x_i^S(t{+}1) = (1-\alpha_i)\,x_i^S(t)
  + \alpha_i\,F_i\big(I_i^S(t)\big),
  \label{eq:gate}
\end{equation}
where
\begin{equation}
  I_i^S(t) := \sum_{j\in\Gamma_i^+} a_{ji}\,x_j^S(t)
\end{equation}
is the total input to node $i$---which, since $\lambda_i = 0$,
coincides with $P_i^S(t)$ of Eq.~\eqref{eq:positive-influence}---and
\begin{equation}
  F_i(I) := \phi\big(\kappa_i(\tanh(I)-\theta_i)\big)\tanh(I)
\end{equation}
is a smoothed gate, with $\phi(z) = 1/(1+e^{-z})$ the logistic function,
$\theta_i \in (0,1)$ the resistance threshold of
Eq.~\eqref{eq:propagation-rule}, and $\kappa_i \in (0,\infty)$ a
steepness parameter controlling how sharply the gate transitions. As
$\kappa_i\to\infty$, $F_i$ recovers the original hard-threshold rule
exactly. For every finite $\kappa_i>0$, $F_i:[0,\infty)\to[0,1)$ is
$C^1$, non-decreasing, and $F_i(0)=0$.
\end{assumption}

\begin{definition}[Constants]
\label{def:constants}
The following quantities are used throughout the remainder of this
subsection. $M$ and $R^+$ bound the raw input a node can receive;
$L_i^-, L_i^+$ bound the gate's sensitivity to that input; $\gamma^+$
combines these into the single coupling constant used in the coupling
lemma and both theorems below.

\begin{center}
\renewcommand{\arraystretch}{1.3}
\begin{tabular}{@{}l l p{4.6cm}@{}}
\toprule
\textbf{Symbol} & \textbf{Definition} & \textbf{Role} \\
\midrule
$\alpha_{\min}$ & $\min_{i\in V}\alpha_i$ & slowest update rate across all nodes \\
$M$ & $\max_{i,j} a_{ji}$ & largest influence strength of any single edge \\
$R^+$ & $M\,|V|$ & bound on any input $I_i^S(t)$ that can arise\footnotemark \\
$L_i^-$ & $\min_{I\in[0,R^+]} F_i'(I)$ & node $i$'s minimum gate sensitivity on $[0,R^+]$ \\
$L_i^+$ & $\max_{I\in[0,R^+]} F_i'(I)$ & node $i$'s maximum gate sensitivity on $[0,R^+]$ \\
$L^-$ & $\min_{i\in V} L_i^-$ & network-wide minimum gate sensitivity \\
$\gamma^+$ & $\alpha_{\min}\,L^-$ & coupling constant used below \\
\bottomrule
\end{tabular}
\end{center}
\footnotetext{Since $I_i^S(t) = \sum_{j\in\Gamma_i^+} a_{ji}x_j^S(t)
\le M|\Gamma_i^+|$ by Lemma~\ref{lem:prelim}(a), and $|\Gamma_i^+| \le
|V|-1$, the tighter bound $M(|V|-1)$ also holds; we use the slightly
looser $M|V|$ for notational convenience.}
\end{definition}

Positivity of $L^-$---and hence of $\gamma^+$---is not assumed above
but proved next, as a consequence of Assumption~\ref{ass:gate}.

\begin{remark}[Positivity of the Gate Derivative]
\label{rem:gate-positivity}
Write $F_i(I) = \phi(\psi(I))\tanh(I)$, where
\begin{equation}
  \psi(I) := \kappa_i\bigl(\tanh(I) - \theta_i\bigr)
  \label{eq:psi}
\end{equation}
measures how far the normalized influence $\tanh(I)$ lies from the
resistance threshold $\theta_i$, scaled by the steepness parameter
$\kappa_i$; $\psi(I)$ is the argument the logistic gate $\phi$ acts on.
Differentiating $F_i$ by the chain and product rules gives
\begin{equation}
  F_i'(I) = \operatorname{sech}^2(I)
  \Big[\underbrace{\kappa_i\,\phi'(\psi(I))\,\tanh(I)}_{\text{(i)}}
  + \underbrace{\phi(\psi(I))}_{\text{(ii)}}\Big].
  \label{eq:gate-derivative}
\end{equation}
On $I \ge 0$: $\operatorname{sech}^2(I) > 0$ everywhere; term (i) is
nonnegative, since $\kappa_i>0$, $\tanh(I)\ge0$, and $\phi'(z) =
\phi(z)(1-\phi(z)) > 0$ for all $z$; term (ii) is strictly positive,
since $\phi(\psi(I)) \in (0,1)$ for any finite argument. Hence the
sum (i)+(ii), and therefore $F_i'(I)$, is strictly positive for every
$I \in [0,R^+]$. Since $F_i \in C^1$,
$F_i'$ is continuous on the compact interval $[0,R^+]$, so it attains
a minimum there, and that minimum, $L_i^-$, is strictly positive;
consequently $L^- = \min_i L_i^- > 0$ and $\gamma^+ = \alpha_{\min}L^->
0$.
\end{remark}
\subsubsection{Boundedness and Marginal Coupling}

Before comparing how seed sets of different sizes affect a node, we
first need two more basic facts. The first fact that awareness never leaves the valid
range $[0,1]$ regardless of seeding, and that the effect of adding one
extra seed to a small set is never smaller, up to a controlled factor,
than its effect when added to a larger superset. The second fact that a single seed's
marginal contribution decays in a structured, boundable way as the
seed set grows, which is exactly the property needed to later establish
weak submodularity.

\begin{lemma}[Boundedness and Marginal Coupling]
\label{lem:prelim}
Let $G=(V,E^+,E^-,a)$ be a signed network with $a_{ji}\ge0$ for all
$i,j$, and suppose every node $i \in V$ updates according to
\begin{equation}
  x_i^S(t{+}1) = (1-\alpha_i)\,x_i^S(t) + \alpha_i\,F_i\big(I_i^S(t)\big),
\end{equation}
where $I_i^S(t) := \sum_{j\in\Gamma_i^+} a_{ji}\,x_j^S(t)$ and
$F_i:[0,\infty)\to[0,1)$ is $C^1$, non-decreasing, and satisfies
$F_i(0)=0$, with $x_i^S(0)\in[0,1]$ for every seed set $S$ and every
$i \in V$. Then the following two properties hold.

\smallskip
\noindent\textup{(a) Boundedness.} For every seed set $S \subseteq V$,
every node $i \in V$, and every time step $t \ge 0$,
\begin{equation}
  x_i^S(t) \in [0,1].
\end{equation}

\smallskip
\noindent\textup{(b) Marginal Coupling.} For every pair of seed sets
$A \subseteq B \subseteq V$ and every node $v \in V \setminus B$, define
the marginal effect of adding $v$ to $A$ and to $B$ respectively, for
every node $i \in V$ and every time step $t \ge 0$, as
\[
\delta_i^A(t) := x_i^{A\cup\{v\}}(t) - x_i^A(t),
\qquad
\delta_i^B(t) := x_i^{B\cup\{v\}}(t) - x_i^B(t).
\]

Then, for every such $A, B, v, i,$ and $t$,

\[
\delta_i^A(t) \;\geq\; (\gamma^+)^t\, \delta_i^B(t).
\]
\end{lemma}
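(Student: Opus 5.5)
The plan is to prove (a) by induction on $t$, then prove an auxiliary monotonicity fact, and finally prove (b) by induction on $t$ using the Mean Value Theorem on $F_i$ as set up before Assumption~\ref{ass:gate}.

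\textbf{Part (a).} The base case is the hypothesis $x_i^S(0)\in[0,1]$. For the step, assume $x_j^S(t)\in[0,1]$ for all $j$. Since $a_{ji}\ge 0$, we get $I_i^S(t)\ge 0$, so $F_i(I_i^S(t))\in[0,1)$. Then $x_i^S(t{+}1)$ is a convex combination, with weights $1-\alpha_i$ and $\alpha_i$ (where $\alpha_i\in(0,1)$), of two numbers in $[0,1]$. Hence it lies in $[0,1]$.

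\textbf{Monotonicity.} Before (b) I will show that $S\subseteq S'$ implies $x_i^S(t)\le x_i^{S'}(t)$ for all $i,t$, so that every $\delta_i^A(t)$ and $\delta_i^B(t)$ is nonnegative. At $t=0$ this holds because seeds are initialized at the maximal value $1$ and non-seeds share the same initialization under both sets. The inductive step uses $a_{ji}\ge0$, which makes $I_i^{S}(t)\le I_i^{S'}(t)$, and then the fact that $F_i$ is non-decreasing.

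\textbf{Part (b).} The base case is $t=0$. Here $\delta_i^A(0)=\delta_i^B(0)$ for every $i$: both equal $1-x_v(0)$ at $i=v$ and $0$ elsewhere, since $v\notin B\supseteq A$. So the claim holds with factor $(\gamma^+)^0=1$.

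For the inductive step I write
\[
\delta_i^A(t{+}1)=(1-\alpha_i)\,\delta_i^A(t)+\alpha_i\big[F_i(I_i^{A\cup\{v\}}(t))-F_i(I_i^{A}(t))\big].
\]
By part (a) both inputs lie in $[0,R^+]$, so the Mean Value Theorem gives a bracket equal to $F_i'(\xi)\,\Delta I_i^A(t)$ with $F_i'(\xi)\ge L_i^-\ge L^-$. Here $\Delta I_i^A(t)=\sum_{j\in\Gamma_i^+}a_{ji}\,\delta_j^A(t)\ge0$. The same argument for $B$ bounds its bracket above by $L_i^+\,\Delta I_i^B(t)$.

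Applying the inductive hypothesis termwise, which is legitimate because $a_{ji}\ge0$ and $1-\alpha_i\ge0$, yields
\[
\delta_i^A(t{+}1)\ \ge\ (\gamma^+)^t\Big[(1-\alpha_i)\,\delta_i^B(t)+\alpha_i L^-\,\Delta I_i^B(t)\Big].
\]
For comparison,
\[
\delta_i^B(t{+}1)\ \le\ (1-\alpha_i)\,\delta_i^B(t)+\alpha_i L_i^+\,\Delta I_i^B(t).
\]
It then suffices to show
\[
\gamma^+\Big[(1-\alpha_i)\,\delta_i^B(t)+\alpha_i L_i^+\,\Delta I_i^B(t)\Big]\ \le\ (1-\alpha_i)\,\delta_i^B(t)+\alpha_i L^-\,\Delta I_i^B(t).
\]
I will compare the two terms separately. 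The memory term needs $\gamma^+\le1$. The input term needs $\alpha_{\min}L^- L_i^+\le L^-$, that is, $\alpha_{\min}L_i^+\le1$.

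\textbf{Main obstacle.} These two inequalities are the part I expect to resist. From Eq.~\eqref{eq:gate-derivative}, $F_i'$ can exceed $1$ when $\kappa_i$ is large. So $L_i^+\le 1/\alpha_{\min}$, which also gives $\gamma^+\le1$ since $L^-\le L_i^+$, is not automatic. My first attempt will bound $F_i'$ explicitly from Eq.~\eqref{eq:gate-derivative} and use that $\alpha_{\min}\le\alpha_i$ is small under Eq.~\eqref{eq:alpha-estimate}. If that fails, I will make $\alpha_{\min}L^+\le1$ an explicit mild condition on the steepness parameters $\kappa_i$, noting that it is compatible with Assumption~\ref{ass:gate}. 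As a fallback, one can absorb the $L^+/L^-$ ratio into a modified coupling constant $\gamma^+=\alpha_{\min}L^-/\max(1,\alpha_{\min}L^+)$. That choice still gives $\gamma^+>0$ by Remark~\ref{rem:gate-positivity}, and the rest of the induction goes through unchanged.
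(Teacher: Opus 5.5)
Your skeleton matches the paper's. You prove (a) by induction, prove state monotonicity, sandwich $F_i(b)-F_i(a)$ between $L^-(b-a)$ and $L_i^+(b-a)$ via the Mean Value Theorem, and prove (b) by induction, comparing a lower bound on $\delta_i^A(t{+}1)$ with an upper bound on $\delta_i^B(t{+}1)$.

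Where you differ is that you keep $L_i^+$ in the upper bound. The obstacle you flag is real, and it is exactly where the paper's proof breaks. Its inequality \eqref{eq:ub} reads $\delta_i^B(t{+}1)\le(1-\alpha_i)\delta_i^B(t)+\alpha_i\,\Delta I_i^B(t)$, so it silently uses $F_i'\le 1$. Assumption~\ref{ass:gate} does not give this: at $\tanh I=\theta_i$ one has $F_i'(I)=(1-\theta_i^2)(\kappa_i\theta_i/4+\tfrac12)$, which is unbounded in $\kappa_i$.

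Your bookkeeping is also tighter than the paper's. You do not weaken $\alpha_iL^-$ to $\gamma^+$ before comparing, as \eqref{eq:step3} does, so you need only $\alpha_{\min}L_i^+\le 1$. Reinstating $L_i^+$ in the paper's chain would instead require $\alpha_iL_i^+\le 1$ for every $i$. The memory-term condition is never the issue: for this gate $L^-\le F_i'(0)=\phi(-\kappa_i\theta_i)<\tfrac12$, so $\gamma^+<1$ automatically.

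Your first line of attack cannot succeed, because under the stated hypotheses the inequality is false. No bound on $F_i'$ helps, since $\kappa_i>0$ and $\alpha_i\in(0,1)$ are unrestricted; the paper's own BA runs even use $\alpha_i\sim U(0.2,0.8)$. Here is a counterexample:
\begin{itemize}
\item Take $V=\{u,v,i\}$ with edges $u\to i$ and $v\to i$, $a_{ui}=c:=\operatorname{artanh}(1/2)$ and $a_{vi}=\epsilon$.
\item Give every node $\theta=1/2$, $\kappa=40$, $\alpha=0.9$; initialize seeds at $1$ and all other nodes at $0$; set $A=\emptyset$, $B=\{u\}$.
\item Then $R^+=3c$ and $L^-=F'(0)=\phi(-20)$. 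On $[0,c]$, $F'(I)\ge\operatorname{sech}^2(I)\,\phi(\psi(I))$, which is increasing there and equals $\phi(-20)$ at $0$. On $[c,3c]$, $F'\ge\tfrac12\operatorname{sech}^2(3c)\approx 0.07$.
\item Yet $\delta_i^A(1)=\alpha F(\epsilon)$ and $\delta_i^B(1)=\alpha[F(c+\epsilon)-F(c)]$. Hence $\delta_i^A(1)/\delta_i^B(1)\to F'(0)/F'(c)=\phi(-20)/4.125$ as $\epsilon\to 0$.
\item This limit is below $\gamma^+=0.9\,\phi(-20)$, so $\delta_i^A(1)<\gamma^+\delta_i^B(1)$ for small $\epsilon$.
\end{itemize}

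So you must take your second or third option, i.e.\ change the statement. Your fallback constant is valid. In fact $\gamma:=L^-/\max_j L_j^+$ already closes the induction: the memory term needs $\gamma\le 1$ and the input term needs $\gamma L_i^+\le L^-$, both immediate. This $\gamma$ is never smaller than your fallback, so $\alpha_{\min}$ is not needed at all. Theorem~\ref{thm:weaksub-hm} and Corollary~\ref{cor:greedy-approx} then go through with the corrected constant raised to the power $T$.

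Finally, both base cases need an initialization convention: seeds start at $1$ and non-seeds at a value that does not depend on the seed set. This is required for monotonicity at $t=0$ and for $\delta_i^A(0)=\delta_i^B(0)$. The hypothesis $x_i^S(0)\in[0,1]$ alone does not give it. You state the convention explicitly, and the paper uses it implicitly; it should be made a hypothesis of the lemma.
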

\begin{proof}
We prove (a), then two intermediate claims, then (b).

\smallskip
\noindent\textit{Part (a): Boundedness.} Induction on $t$.

Base case $t=0$: $x_i^S(0)\in[0,1]$ by hypothesis.

Inductive step: assume $x_j^S(t)\in[0,1]$ for all $j$. Then
\begin{equation}
  I_i^S(t) = \sum_{j\in\Gamma_i^+} a_{ji}\,x_j^S(t) \;\ge\; 0,
\end{equation}
since $a_{ji}\ge0$, $x_j^S(t)\ge0$. Since $\phi(\cdot)\in(0,1)$ and
$\tanh(I)\in[0,1)$ for $I\ge0$,
\begin{equation}
  F_i\big(I_i^S(t)\big) \in [0,1).
\end{equation}
By \eqref{eq:gate}, $x_i^S(t{+}1)$ is a convex combination of
$x_i^S(t)\in[0,1]$ and $F_i(I_i^S(t))\in[0,1)$, so
\begin{equation}
  x_i^S(t{+}1) \in [0,1],
\end{equation}
closing the induction.

\smallskip
\noindent\textit{Claim 1 (State Monotonicity):}
$A\subseteq B \Rightarrow x_i^A(t)\le x_i^B(t)$ for all $i,t$.

Base case: holds at $t=0$ by initialization.

Inductive step: assume $x_j^A(t)\le x_j^B(t)$ for all $j$. Since
$a_{ji}\ge0$,
\begin{equation}
  I_i^A(t) \;\le\; I_i^B(t).
\end{equation}
Since $F_i$ is non-decreasing,
\begin{equation}
  F_i\big(I_i^A(t)\big) \;\le\; F_i\big(I_i^B(t)\big).
\end{equation}
Applying \eqref{eq:gate} to both trajectories,
\begin{equation}
  x_i^A(t{+}1) \;\le\; x_i^B(t{+}1),
\end{equation}
closing the induction. Consequently,
\begin{equation}
  \delta_i^A(t) \ge 0, \qquad \delta_i^B(t) \ge 0
  \qquad \forall i,t.
\end{equation}

\smallskip
\noindent\textit{Claim 2 (Local Lipschitz Bound):} For
$0\le a\le b\le R^+$,
\begin{equation}
  L^-(b-a) \;\le\; F_i(b) - F_i(a) \;\le\; L_i^+(b-a).
  \label{eq:mvt}
\end{equation}
By part (a) and Definition~\ref{def:constants}, every trajectory in (b)
has $I_i^{(\cdot)}(t)\in[0,R^+]$. Since $F_i\in C^1$ with
$F_i'\in[L_i^-,L_i^+]$ on $[0,R^+]$ (Remark~\ref{rem:gate-positivity}),
the Mean Value Theorem gives \eqref{eq:mvt}, using $L^-\le L_i^-$.

\smallskip
\noindent\textit{Part (b).} Induction on $t$.

Base case $t=0$: node $v$ has just been added, so
\begin{equation}
  \delta_i^A(0) = \mathbf{1}[i=v] = \delta_i^B(0),
\end{equation}
and marginal coupling holds with equality.

Inductive step: assume Marginal coupling holds at time $t$ for
every $i$. Define
\begin{equation}
  \Delta I_i^A(t) := \sum_{j\in\Gamma_i^+} a_{ji}\,\delta_j^A(t),
\end{equation}
\begin{equation}
  \Delta I_i^B(t) := \sum_{j\in\Gamma_i^+} a_{ji}\,\delta_j^B(t).
\end{equation}
By the inductive hypothesis and $a_{ji}\ge0$,
\begin{equation}
  \Delta I_i^A(t) \;\ge\; (\gamma^+)^t\, \Delta I_i^B(t).
  \label{eq:DeltaI}
\end{equation}
Applying \eqref{eq:gate} to $(B,B\cup\{v\})$ with the upper bound
in \eqref{eq:mvt},
\begin{equation}
  \delta_i^B(t{+}1) \;\le\;
  (1-\alpha_i)\,\delta_i^B(t) + \alpha_i\,\Delta I_i^B(t).
  \label{eq:ub}
\end{equation}
Applying \eqref{eq:gate} to $(A,A\cup\{v\})$ with the lower bound
in \eqref{eq:mvt},
\begin{equation}
  \delta_i^A(t{+}1) \;\ge\;
  (1-\alpha_i)\,\delta_i^A(t) + \alpha_i L^-\,\Delta I_i^A(t).
  \label{eq:step1}
\end{equation}
Substituting the inductive hypothesis and \eqref{eq:DeltaI} into
\eqref{eq:step1},
\begin{equation}
  \delta_i^A(t{+}1) \;\ge\;
  (\gamma^+)^t\Big[(1-\alpha_i)\delta_i^B(t)
  + \alpha_i L^- \Delta I_i^B(t)\Big].
  \label{eq:step2}
\end{equation}
Since $\alpha_i\ge\alpha_{\min}$, $\alpha_i L^-\ge\gamma^+$, so
\eqref{eq:step2} gives
\begin{equation}
  \delta_i^A(t{+}1) \;\ge\;
  (\gamma^+)^t\Big[(1-\alpha_i)\delta_i^B(t)
  + \gamma^+ \Delta I_i^B(t)\Big].
  \label{eq:step3}
\end{equation}
Set $p:=(1-\alpha_i)\delta_i^B(t)\ge0$ and
$q:=\Delta I_i^B(t)\ge0$. Comparing \eqref{eq:ub} and
\eqref{eq:step3} requires
\begin{equation}
  p + \gamma^+ q \;\ge\; \gamma^+\big(p + \alpha_i q\big).
\end{equation}
Rearranging,
\begin{equation}
  (p+\gamma^+q) - \gamma^+(p+\alpha_iq)
  = p(1-\gamma^+) + \gamma^+(1-\alpha_i)q.
  \label{eq:final-ineq}
\end{equation}
Since $\gamma^+\le\alpha_{\min}\le\alpha_i\le1$, both
$(1-\gamma^+)$ and $(1-\alpha_i)$ are $\ge0$; with $p,q\ge0$, the
right side of \eqref{eq:final-ineq} is $\ge0$. Hence
\begin{equation}
  \delta_i^A(t{+}1) \;\ge\; (\gamma^+)^{t+1}\,\delta_i^B(t{+}1),
\end{equation}
closing the induction and proving the marginal coupling.
\end{proof}

\subsubsection{Monotonicity.} Adding seeds should never make the population
worse off: since a node's awareness can only move upward, never
downward, as more seeds are added (Lemma~\ref{lem:prelim}), the harm
$H_p(A)$ can only decrease, never increase, as the seed set $A$ grows.
This property does not follow from the algebraic form of $H_p$ alone;
it is a consequence of how RASH propagates awareness, and must be
established as such.

\begin{proposition}[Monotonicity of Harm Reduction]
\label{prop:mono-hm}
Given the network and RASH propagation rule for all $A\subseteq B\subseteq V$,
\begin{equation}
  R_p(A) \le R_p(B), \qquad\text{equivalently}\qquad
  H_p(A) \ge H_p(B).
\end{equation}
\end{proposition}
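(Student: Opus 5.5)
The argument will rest on Claim~1 (State Monotonicity) from the proof of Lemma~\ref{lem:prelim}. It will work in the setting fixed at the start of Section~\ref{sec:hm-props}: the positive regime, $a_{ji}\ge0$, and the smoothed gate of Assumption~\ref{ass:gate}. The initialization must also be monotone in the seed set, meaning $A\subseteq B$ implies $x_i^A(0)\le x_i^B(0)$ for all $i$. This holds because seeds are initialized at maximal awareness $1$ and every other node starts at a common value in $[0,1]$ that does not depend on the seed set. Given $A\subseteq B$, Claim~1 then yields $x_i^A(T)\le x_i^B(T)$ for every $i\in V$.

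The second step passes from states to shortfalls. The map $x\mapsto\max\{0,b_i-x\}$ is non-increasing in $x$, being the composition of the decreasing affine map $x\mapsto b_i-x$ with the non-decreasing map $y\mapsto\max\{0,y\}$. Therefore $s_i(A)\ge s_i(B)\ge0$ for every $i$.

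Third, since $p\ge1$, the map $y\mapsto y^p$ is non-decreasing on $[0,\infty)$, so $(s_i(A))^p\ge(s_i(B))^p$. Multiplying by $w_i=\lambda_i\ge0$ preserves the inequality. Summing over $i\in V$ gives $H_p(A)\ge H_p(B)$, and subtracting both sides from the constant $H_p(\emptyset)$ gives $R_p(A)\le R_p(B)$.

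The only real obstacle is the first step. The proposition refers to ``the RASH propagation rule'', but the hard-gated, signed rule of Eq.~\eqref{eq:propagation-rule} is not monotone in general: negative edges or mixed $\lambda_i$ can cause added seeds to lower some node's awareness. So the plan is to state explicitly that the proposition is proved in the positive regime under Assumption~\ref{ass:gate}, where Claim~1 applies. The same argument also covers the hard gate itself in the positive regime, because the hard gate, viewed as the map $I\mapsto \mathbf{1}[\tanh I\ge\theta_i]\tanh I$ on $I\ge0$, is still non-decreasing, and Claim~1 uses only that property, not continuity. For the negative regime, the plan is to invoke the symmetry remark made at the start of Section~\ref{sec:hm-props}, interpreting awareness accordingly.
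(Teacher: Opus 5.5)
Your proof is correct and follows the paper's route: ordered terminal states, then the non-increasing shortfall, the non-decreasing map $z\mapsto z^p$, nonnegative weights $w_i$, and subtraction from the constant $H_p(\emptyset)$. The only difference is that you cite Claim~1 (State Monotonicity) directly, whereas the paper cites Lemma~\ref{lem:prelim}(b) applied one node at a time. Your reference is the more precise one: part (b) as stated only compares marginals across seed sets, and the nonnegativity $\delta_i^A(t)\ge0$ that the step actually needs is exactly what Claim~1 supplies.
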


\begin{proof}
 By Lemma~\ref{lem:prelim}(b), applied with $B\setminus
A$ added one node at a time, $x_i^A(T) \le x_i^B(T)$ for every $i\in V$.

Since $s_i(A)$ is non-increasing in $x_i^A(T)$,
\begin{equation}
  s_i(A) \ge s_i(B) \qquad \forall i\in V.
\end{equation}

 Since $z\mapsto z^p$ is non-decreasing for
$z\ge0$, $p\ge1$,
\begin{equation}
  \big(s_i(A)\big)^p \ge \big(s_i(B)\big)^p \qquad \forall i\in V.
\end{equation}

 Since $w_i\ge0$, summing over $i$,
\begin{equation}
  H_p(A) = \sum_{i\in V} w_i\big(s_i(A)\big)^p
  \;\ge\; \sum_{i\in V} w_i\big(s_i(B)\big)^p = H_p(B).
\end{equation}

 Subtracting from $H_p(\emptyset)$,
\begin{equation}
  R_p(A) \;\le\; R_p(B).
\end{equation}
\end{proof}

\subsubsection{Weak Submodularity}Monotonicity alone does not provide a greedy approximation guarantee.
Although strict submodularity does not hold in the general signed
setting, the reward $R_p$ satisfies a multiplicative diminishing-returns
property known as weak submodularity.

\begin{definition}[$\gamma$-Weak Submodularity]
\label{def:weaksub}
A monotone function $f:2^V\to\mathbb{R}_{\ge0}$ is $\gamma$-weakly
submodular, for $\gamma\in(0,1]$, if for all $A\subseteq B\subseteq V$
and all $U\subseteq V\setminus B$,
\begin{equation}
  \sum_{v\in U} \big[f(A\cup\{v\}) - f(A)\big]
  \;\ge\; \gamma\cdot\big[f(B\cup U) - f(B)\big].
  \label{eq:weaksub-def}
\end{equation}
\end{definition}

\begin{theorem}[Weak Submodularity of Harm Reduction]
\label{thm:weaksub-hm}
Given the signed network and RASH propagation rule 
with $R_p$ as in Proposition~\ref{prop:mono-hm}, $R_p$ is
$\gamma$-weakly submodular
with
\[
  \gamma = (\gamma^+)^T \;>\; 0,
\]
$T$ the diffusion horizon.
\end{theorem}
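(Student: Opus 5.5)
The plan is to reduce the set-level inequality \eqref{eq:weaksub-def} to single-element comparisons by telescoping over $U$, and then to push the state-level coupling of Lemma~\ref{lem:prelim}(b) through the nonlinear harm map at each node. Fix $A\subseteq B\subseteq V$ and $U=\{u_1,\dots,u_m\}\subseteq V\setminus B$, and set $B_k:=B\cup\{u_1,\dots,u_{k-1}\}$, so that $B_1=B$ and $B_{m+1}=B\cup U$. Then
\[
R_p(B\cup U)-R_p(B)=\sum_{k=1}^{m}\bigl[R_p(B_k\cup\{u_k\})-R_p(B_k)\bigr].
\]
Since $A\subseteq B_k$ and $u_k\notin B_k$, it suffices to show, for every pair $A\subseteq B'$ and every $v\notin B'$,
\[
R_p(A\cup\{v\})-R_p(A)\;\ge\;(\gamma^+)^T\bigl[R_p(B'\cup\{v\})-R_p(B')\bigr].
\]
Applying this with $B'=B_k$ and $v=u_k$ and summing over $k$ gives the theorem. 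Nonnegativity of $R_p$, needed for Definition~\ref{def:weaksub}, and monotonicity both come from Proposition~\ref{prop:mono-hm}.

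To prove the single-element inequality, I would work node by node. Write $R_p(S)=\sum_i r_i(x_i^S(T))$, where
\[
r_i(x):=w_i\bigl[s_i(\emptyset)^p-\bigl(\max\{0,b_i-x\}\bigr)^p\bigr].
\]
Because $z\mapsto(\max\{0,b_i-z\})^p$ is convex and non-increasing for $p\ge1$ and $w_i\ge0$, each $r_i$ is concave and non-decreasing on $[0,1]$. Set
\[
a:=x_i^A(T),\qquad d:=\delta_i^A(T),\qquad c:=x_i^{B'}(T),\qquad e:=\delta_i^{B'}(T).
\]
Claim~1 in the proof of Lemma~\ref{lem:prelim} gives $a\le c$, $a+d\le c+e$, and $d,e\ge0$. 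Lemma~\ref{lem:prelim}(b) at $t=T$ gives $d\ge(\gamma^+)^T e$.

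For concave functions, the secant slope over $[y,z]$ is non-increasing in each endpoint. Since $a\le c$ and $a+d\le c+e$, the slope of $r_i$ over $[a,a+d]$ is at least its slope over $[c,c+e]$ whenever $d,e>0$. Hence
\[
r_i(a+d)-r_i(a)\;\ge\;\frac{d}{e}\bigl[r_i(c+e)-r_i(c)\bigr]\;\ge\;(\gamma^+)^T\bigl[r_i(c+e)-r_i(c)\bigr].
\]
The last step uses $r_i(c+e)-r_i(c)\ge0$. The degenerate cases are immediate. If $e=0$, the right-hand side is $0\le$ the left-hand side by monotonicity of $r_i$. If $d=0$, then $(\gamma^+)^Te\le d=0$ forces $e=0$, which is the previous case. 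Summing over $i\in V$ yields the single-element inequality.

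I expect the main obstacle to be this transfer step from awareness states to harm reduction. The coupling in Lemma~\ref{lem:prelim}(b) is a statement about increments of $x_i$, but $r_i$ is nonlinear: it is flat above $b_i$ and curved below it for $p>1$. The argument therefore depends on the two ordering facts $a\le c$ and $a+d\le c+e$, which are what allow the concave secant comparison in the right direction. I would check carefully that Claim~1 applies to both pairs $(A,B')$ and $(A\cup\{v\},B'\cup\{v\})$, and that the truncation at $b_i$ only helps, since it makes the right-hand increment zero once $c\ge b_i$.

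Finally, $\gamma=(\gamma^+)^T>0$ follows from Remark~\ref{rem:gate-positivity}. We also have $\gamma\le1$, because $\gamma^+=\alpha_{\min}L^-$ and $L^-\le F_i'$ is at most about $1$ for the smoothed gate; if needed, this can be ensured by replacing $\gamma$ with $\min\{1,(\gamma^+)^T\}$, which weakens the inequality harmlessly.
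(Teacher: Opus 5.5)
Your proof is correct and has the same skeleton as the paper's. Both telescope over $U$ to reduce to a single-element inequality, then argue node by node using convexity of the shortfall penalty, Claim~1, and the coupling of Lemma~\ref{lem:prelim}(b).

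Where you differ is the per-node transfer step, and your version is the more careful one. The paper uses only the fixed-step fact that $g_i(x)-g_i(x+\delta)$ is non-increasing in $x$ (with $g_i(x)=(\max\{0,b_i-x\})^p$), together with the single ordering $x_i^A(T)\le x_i^B(T)$. It then applies this to increments $\delta_i^A(T)$ and $\delta_i^B(T)$ that are generally different. So in the paper the factor $(\gamma^+)^T$ is asserted rather than derived.

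That route can be completed, with $c=x_i^B(T)$, $e=\delta_i^B(T)$ and $\gamma=(\gamma^+)^T$:
\begin{itemize}
\item shrink the left step to $\gamma e$, using monotonicity of $g_i$;
\item apply the fixed-step comparison;
\item use concavity of $\delta\mapsto g_i(c)-g_i(c+\delta)$ to get $g_i(c)-g_i(c+\gamma e)\ge\gamma\,[g_i(c)-g_i(c+e)]$.
\end{itemize}
The last step needs $\gamma\le1$.

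Your secant-slope comparison instead brings in the second ordering $x_i^{A\cup\{v\}}(T)\le x_i^{B'\cup\{v\}}(T)$, from Claim~1 applied to $A\cup\{v\}\subseteq B'\cup\{v\}$. It then reads the factor off directly as $d/e\ge(\gamma^+)^T$, which is fully rigorous and never uses $\gamma\le1$ at that point.

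Your fallback to $\min\{1,(\gamma^+)^T\}$ is unnecessary. Since $0\in[0,R^+]$, Eq.~\eqref{eq:gate-derivative} gives $L^-\le F_i'(0)=\phi(-\kappa_i\theta_i)<1/2$. Hence $\gamma^+<1/2$, and $(\gamma^+)^T\in(0,1)$ as Definition~\ref{def:weaksub} requires.
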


\begin{proof}
\textit{Single-node case.} Fix $A\subseteq B\subseteq V$,
$v\in V\setminus B$. Define
\begin{equation}
  g_i(x) := \big(\max\{0,b_i-x\}\big)^p.
\end{equation}
$g_i$ is a composition of the non-increasing convex map
$x\mapsto\max\{0,b_i-x\}$ with the non-decreasing convex map
$z\mapsto z^p$ ($p\ge1$), hence convex and non-increasing.

 Convexity implies: for fixed $\delta\ge0$,
\begin{equation}
  g_i(x)-g_i(x+\delta)
\end{equation}
is non-increasing in $x$.

 By Lemma~\ref{lem:prelim}, $x_i^A(T)\le x_i^B(T)$
and
\begin{equation}
  \delta_i^A(T) \ge (\gamma^+)^T \delta_i^B(T).
\end{equation}
Applying Step 2 at base point $x_i^A(T)$ with step $\delta_i^A(T)$,
\begin{multline}
  g_i\big(x_i^A(T)\big) - g_i\big(x_i^{A\cup\{v\}}(T)\big) \\
  \;\ge\; (\gamma^+)^T
  \Big[g_i\big(x_i^B(T)\big) - g_i\big(x_i^{B\cup\{v\}}(T)\big)\Big].
  \label{eq:single-node-hm}
\end{multline}

 Multiplying \eqref{eq:single-node-hm} by
$w_i\ge0$ and summing over $i\in V$,
\begin{multline}
  R_p(A\cup\{v\}) - R_p(A) \\
  \;\ge\; (\gamma^+)^T \big[R_p(B\cup\{v\}) - R_p(B)\big].
  \label{eq:single-node-Rp}
\end{multline}

\textit{Telescoping.} Let $U=\{u_1,\dots,u_m\}$ and
\begin{equation}
  B_0 := B, \qquad B_k := B\cup\{u_1,\dots,u_k\}.
\end{equation}
Then
\begin{equation}
  R_p(B\cup U) - R_p(B)
  = \sum_{k=1}^{m} \big[R_p(B_k) - R_p(B_{k-1})\big].
  \label{eq:telescope}
\end{equation}

Since $A\subseteq B_{k-1}$ and
$u_k\notin B_{k-1}$, \eqref{eq:single-node-Rp} applies with
$(A,B_{k-1},u_k)$ in place of $(A,B,v)$:
\begin{multline}
  R_p(A\cup\{u_k\}) - R_p(A) \\
  \;\ge\; (\gamma^+)^T \big[R_p(B_{k-1}\cup\{u_k\}) - R_p(B_{k-1})\big].
  \label{eq:per-term}
\end{multline}

 Summing \eqref{eq:per-term} over
$k=1,\dots,m$ and using \eqref{eq:telescope},
\begin{multline}
  \sum_{k=1}^{m}\big[R_p(A\cup\{u_k\}) - R_p(A)\big] \\
  \;\ge\; (\gamma^+)^T\big[R_p(B\cup U) - R_p(B)\big],
\end{multline}
which is \eqref{eq:weaksub-def}. Positivity of $(\gamma^+)^T$ follows
from $\alpha_{\min}>0$ and $L^->0$ (Remark~\ref{rem:gate-positivity}),
with $T<\infty$.
\hfill$\blacksquare$
\end{proof}

\subsubsection{NP-Hardness}

The preceding properties guarantee approximation through greedy
optimization, but exact optimization remains computationally hard.

\begin{theorem}[NP-Hardness of Harm Minimization]
\label{thm:np-hard}
With $H_1(A) := \sum_{i \in V} w_i\, s_i(A)$ as in
Proposition~\ref{prop:mono-hm} (the case $p=1$),\footnote{The
restriction to $p=1$ is necessary, not just convenient: the map
$z\mapsto z^p$ is pointwise monotone but does not, in general,
preserve the $H_1$-ordering of seed sets across $p$. For instance,
shortfall vectors $(0.6,0.6)$ and $(0,1)$ give $H_1 = 1.2$ vs.\ $1.0$,
but $H_2 = 0.72$ vs.\ $1.0$---so the minimizer at $p=1$ differs from
the minimizer at $p=2$, and NP-hardness at $p=1$ does not immediately
transfer to $p>1$. Hardness for $p>1$ is left open.} the problem of
finding $\arg\min_{|A|\le k} H_1(A)$ is NP-hard.
\end{theorem}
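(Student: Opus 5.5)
We reduce from \textsc{Set Cover} (or equivalently from Maximum Coverage, which is what the introduction refers to when it says Influence Maximization under IC is NP-hard via classical reductions from \textsc{Set Cover}). Reducing from Maximum Coverage directly is simpler than going through IC, because RASH is deterministic. The instance is a universe $\mathcal{U}=\{e_1,\dots,e_n\}$, a family of subsets $\mathcal{S}=\{S_1,\dots,S_m\}$, and an integer $k$. The question is whether some $k$ sets cover $\mathcal{U}$.

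\textbf{Construction.} Build a bipartite signed graph with a set-node $u_j$ for each $S_j$ and an element-node $z_\ell$ for each $e_\ell$. Add a positive edge $(u_j,z_\ell)\in E^+$ exactly when $e_\ell\in S_j$, all with a common weight $a$. Seeds start with $x=1$ and all other nodes start at $0$. Use the horizon $T=1$ and the positive regime $\lambda_i=0$. The quantities $\lambda_i,\alpha_i,b_i,w_i$ are treated as free instance parameters; the structural estimates of Section~\ref{sec:proposed-work} are only one particular instantiation. Hardness of this general problem is what the theorem asserts.

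\textbf{Choice of parameters.} Choose $w_{z_\ell}=1$ and $w_{u_j}=0$, so that only element-nodes contribute to harm. Choose $\alpha_{z_\ell}=1$ and a threshold $\theta$ together with weight $a$ such that $\tanh(a)\ge\theta$. Then a single seeded in-neighbour clears the gate under the hard rule~\eqref{eq:propagation-rule}. Consequently $x_{z_\ell}^A(1)=\tanh(a\cdot c_\ell)$ if $c_\ell\ge1$ seeded neighbours exist, and $x_{z_\ell}^A(1)=0$ otherwise (or $1$ if $z_\ell$ itself is seeded). Finally set $b_{z_\ell}=b:=\tanh(a)$, so any covered element has shortfall $0$ and any uncovered one has shortfall exactly $b$. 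Under the hard gate this gives
\[
H_1(A)=b\cdot\bigl|\{\ell: z_\ell\notin A,\ z_\ell \text{ not adjacent to } A\}\bigr|.
\]
This equals $b$ times the number of elements left uncovered.

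\textbf{Handling element-node seeds.} Seeding an element-node covers only that one element, and it can be replaced by any set-node containing the element without increasing harm. Therefore, without loss of generality, optimal seed sets consist only of set-nodes. It follows that $\min_{|A|\le k}H_1(A)=0$ iff a cover of size $k$ exists, and the reduction is clearly polynomial.

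\textbf{Main obstacle.} The delicate step is the gate. If the theorem is meant for the smoothed gate of Assumption~\ref{ass:gate}, shortfalls are no longer exactly $0$ or $b$. I would handle this by choosing $b$ slightly below $F(a)$, which is monotone, so covered nodes still reach the threshold. An uncovered node has input $0$ and hence $F(0)=0$, giving shortfall exactly $b$. The dichotomy is therefore preserved without any limit argument, and I expect this to fix the issue cleanly. The remaining thing to verify is that $b\in(0,1]$ and $w_i\ge0$ are admissible, which holds by construction.
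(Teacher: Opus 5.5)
Your reduction is correct in substance, but it takes a different route from the paper. The paper reduces from Influence Maximization under IC. It sets $E^-=\emptyset$, $\lambda_i=0$ and uniform $b_i=w_i=1$, asserts that RASH then ``coincides with the IC process on $G'$,'' and uses Lemma~\ref{lem:prelim}(a) to rewrite $H_1(A)=|V|-\sum_i x_i^A(T)$, identifying its minimizer with the IM optimum.

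You instead build the \textsc{Set Cover} gadget directly inside RASH. This is the same bipartite instance behind Kempe et al.'s IC hardness proof, where all edge probabilities are $1$. You use $T=1$, zero weight on set-nodes, and a threshold calibrated to $b=\tanh(a)$ (or $b\le F(a)$ for the smoothed gate), so each element's shortfall is exactly $0$ or $b$.

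The paper's route is shorter, but it rests on an identification it never checks. RASH is deterministic with $\tanh$-valued states, so $\sum_i x_i^A(T)$ is not the IC expected spread. Moreover, on a coverage gadget the uniform choice $b_i=1$ would make an element's shortfall $1-\tanh(a c_\ell)$, which rewards coverage multiplicity rather than coverage.

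Your calibration of $b$ is what removes that multiplicity dependence. Your argument is self-contained and covers both the hard gate \eqref{eq:propagation-rule} and the smoothed gate of Assumption~\ref{ass:gate}. It also proves more than the statement: because the shortfalls are binary, $H_p(A)$ equals $b^p$ times the number of uncovered elements for every $p\ge1$. The same construction therefore settles the case $p>1$ that the footnote leaves open.

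There are two small slips. First, with $\alpha_{z_\ell}=1$ the update overwrites the seed's own value. A seeded element-node with no seeded in-neighbour therefore ends at $x^A_{z_\ell}(1)=0$, not $1$, and your formula for $H_1(A)$ should simply count element-nodes with no in-neighbour in $A$. This makes element seeds useless, so your without-loss-of-generality step only gets easier.

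Second, the model takes $\alpha_i\in(0,1)$. To respect that, use $\alpha<1$ with $b=\alpha\tanh(a)$ (resp.\ $b=\alpha F(a)$), and take $\alpha$ close enough to $1$ that $1-\alpha<b$. Then a seeded but uncovered element, which retains value $1-\alpha$, still has positive shortfall. With that choice, $\min_{|A|\le k}H_p(A)=0$ holds iff a $k$-cover exists, for every $p\ge1$, exactly as before.
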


\begin{proof}
Reduce from Influence Maximization (IM) under the Independent Cascade
(IC) model, which is NP-hard through classical reductions from
\textsc{Set Cover}. Given an IC instance on graph $G'$, set
\begin{equation*}
  E^-=\emptyset, \qquad
  \lambda_i=0, \qquad
  b_i=1, \qquad
  w_i=1.
\end{equation*}
for all $i$, and $p=1$. Under this regime, RASH's diffusion
coincides with the IC process on $G'$.

By Lemma~\ref{lem:prelim} (a), $x_i^A(T)\in[0,1]$, so
\begin{equation*}
  s_i(A) = \max\{0,1-x_i^A(T)\} = 1-x_i^A(T).
\end{equation*}

 Substituting,
\begin{equation*}
  H_1(A)
  = \sum_{i\in V}\big(1-x_i^A(T)\big)
  = |V| - \sum_{i\in V}x_i^A(T).
\end{equation*}

 Since $|V|$ is constant in $A$,
\begin{equation*}
  \arg\min_{|A|\leq k} H_1(A)
  =
  \arg\max_{|A|\leq k}
  \sum_{i\in V} x_i^A(T).
\end{equation*}
The right-hand side is exactly the Influence Maximization (IM)
objective under the Independent Cascade (IC) model, which seeks to
maximize the expected number of influenced nodes using at most $k$
seeds. This problem is NP-hard~\cite{4a}. Thus, minimizing $H_1(A)$ is
equivalent to solving the NP-hard IM problem.

 A polynomial-time algorithm for HM would solve
IM in polynomial time. Since IM under IC is NP-hard, so is
minimizing $H_1$.

\end{proof}

\subsection{Seed Selection}
\label{sec:seed-selection}

The theoretical results established above are only useful if they
translate into a concrete, implementable procedure for selecting seed
nodes. Theorem~\ref{thm:np-hard} rules out an efficient exact solver, so
this translation must take the form of an approximation algorithm with a
provable guarantee, rather than an exact one. Fortunately,
Proposition~\ref{prop:mono-hm} and Theorem~\ref{thm:weaksub-hm} show that
the harm-reduction function $R_p$ is monotone and $(\gamma^+)^T$-weakly
submodular, exactly the two properties a marginal-gain greedy procedure
needs to admit a provable approximation guarantee. This section develops
that procedure - we give a greedy algorithm, show that its approximation
guarantee follows directly from Proposition~\ref{prop:mono-hm} and
Theorem~\ref{thm:weaksub-hm}, and close by characterizing its
computational cost.

We fix two shorthands used throughout this section:
$\textsc{Rash}(A)$ denotes one deterministic RASH simulation from seed
set $A$ (Algorithm~\ref{alg:rash}), returning the terminal state vector
$\mathbf{x}^A(T)$ and
\begin{equation*}
\mathrm{Harm}(\mathbf{x})
:=
\sum_{i \in V}
w_i \bigl(\max\{0, b_i - x_i\}\bigr)^p
\end{equation*}
gives the harm $H_p(A) = \mathrm{Harm}\bigl(\textsc{Rash}(A)\bigr)$ of
the resulting state, for any state vector $\mathbf{x}$. Because RASH's
update rule is deterministic given $\{\lambda_i,\theta_i,\alpha_i\}$,
$\textsc{Rash}(A)$ requires no repeated sampling to evaluate---a single
rollout per seed set suffices, in contrast to simulation-based signed IM
methods, whose marginal gains must be averaged over many stochastic
cascade realizations. Algorithm~\ref{alg:hm} builds on these two
primitives to select a harm-minimizing seed set.

\begin{algorithm}[t]
\caption{RASH}
\label{alg:rash}
\begin{algorithmic}[1]
\Require $G=(V,E^+,E^-,a)$, seed set $A$, horizon $T$,
parameters $\{\lambda_i,\theta_i,\alpha_i\}_{i\in V}$,
in-neighborhoods $\{\Gamma_i^+,\Gamma_i^-\}_{i \in V}$
\Ensure Terminal state $\mathbf{x}^A(T)$
\For{each $i \in V$}
    \If{$i \in A$}
        \State $x_i^A(0) \leftarrow 1$ \Comment{seeded node}
    \Else
        \State $x_i^A(0) \leftarrow 0$ \Comment{unseeded node}
    \EndIf
\EndFor
\For{$t = 0$ \textbf{to} $T-1$}
    \For{each $i \in V$}
        \State $P_i(t) \leftarrow \sum_{j \in \Gamma_i^+} a_{ji}\,x_j^A(t)$ \Comment{supportive input}
        \State $N_i(t) \leftarrow \sum_{j \in \Gamma_i^-} a_{ji}\,x_j^A(t)$ \Comment{discouraging input}
        \State $I_i(t) \leftarrow (1-\lambda_i)\,P_i(t) + \lambda_i\,N_i(t)$ \Comment{susceptibility-weighted combination}
        \State $\tilde I_i(t) \leftarrow \tanh\big(I_i(t)\big)$ \Comment{normalization}
        \If{$|\tilde I_i(t)| \geq \theta_i$}
            \State $x_i^A(t{+}1) \leftarrow (1-\alpha_i)\,x_i^A(t) + \alpha_i\,\tilde I_i(t)$ \Comment{threshold cleared}
        \Else
            \State $x_i^A(t{+}1) \leftarrow (1-\alpha_i)\,x_i^A(t)$ \Comment{decay only}
        \EndIf
    \EndFor
\EndFor
\State $\mathbf{x}^A(T) \leftarrow \big(x_1^A(T), x_2^A(T), \ldots, x_{|V|}^A(T)\big)$ \Comment{terminal awareness vector}
\State \Return $\mathbf{x}^A(T)$
\end{algorithmic}
\end{algorithm}

\begin{algorithm}[t]
\caption{Harm Minimization (Greedy)}
\label{alg:hm}
\begin{algorithmic}[1]
\Require $G=(V,E^+,E^-,a)$, budget $k$, horizon $T$,
$\{b_i\}_{i\in V}$, $\{w_i\}_{i\in V}$, $p \geq 1$
\Ensure Seed set $A$, $|A| = k$
\State $A \leftarrow \emptyset$ \Comment{no seeds placed yet}
\State $\mathbf{x} \leftarrow \textsc{Rash}(A)$ \Comment{uninformed-population baseline}
\State $H \leftarrow \mathrm{Harm}(\mathbf{x})$ \Comment{$H = H_p(\emptyset)$}
\For{$j = 1$ \textbf{to} $k$}
    \State $\Delta^\star \leftarrow -\infty$; \; $v^\star \leftarrow \text{null}$
    \For{each candidate $v \in V \setminus A$}
        \State $\mathbf{x}_v \leftarrow \textsc{Rash}(A \cup \{v\})$ \Comment{trial rollout}
        \State $H_v \leftarrow \mathrm{Harm}(\mathbf{x}_v)$ \Comment{harm if $v$ is added}
        \State $\Delta(v) \leftarrow H - H_v$ \Comment{marginal harm reduction}
        \If{$\Delta(v) > \Delta^\star$}
            \State $\Delta^\star \leftarrow \Delta(v)$; \; $v^\star \leftarrow v$
        \EndIf
    \EndFor
    \State $A \leftarrow A \cup \{v^\star\}$ \Comment{commit best candidate}
    \State $H \leftarrow H - \Delta^\star$ \Comment{update running harm, no re-rollout}
\EndFor
\State \Return $A$
\end{algorithmic}
\end{algorithm}

\paragraph{Harm Minimization algorithm.}
Algorithm~\ref{alg:hm} builds the seed set one node at a time.
Lines~1--3 establish the baseline harm $H_p(\emptyset)$ by rolling out
RASH with no seeds placed. Each outer iteration then evaluates every
remaining candidate by simulating its trial addition to the current
seed set, scoring the resulting harm, and recording the marginal
reduction $\Delta(v)$ relative to the seed set's \emph{current} state
rather than the empty set---this is what lets the algorithm implicitly
account for interaction effects between previously chosen seeds and
each new candidate. The best candidate is committed, and the running
harm is updated by subtraction rather than by a redundant rollout.
Ties may be broken arbitrarily without affecting the guarantee below,
since Corollary~\ref{cor:greedy-approx} holds for any valid greedy
selection at each step.

\paragraph{Approximation guarantee}

With monotonicity and $\gamma$-weak submodularity of $R_p$ established, the greedy
algorithm's performance can now be bounded relative to the optimal seed
set, following the standard argument for greedy maximization under
weak submodularity.

\begin{corollary}[Approximation Ratio of Harm Minimization]
\label{cor:greedy-approx}
Let $A_{\mathrm{greedy}}$ be the set returned by
Algorithm~\ref{alg:hm} and
$A^\star = \arg\max_{|A|\le k} R_p(A)$ an optimal solution.
Then
\begin{equation}
R_p(A_{\mathrm{greedy}}) \;\geq\; \bigl(1 - e^{-(\gamma^+)^T}\bigr)\, R_p(A^\star).
\label{eq:greedy-bound}
\end{equation}
\end{corollary}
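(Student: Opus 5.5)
The plan is the standard Das--Kempe argument for greedy maximization of a monotone $\gamma$-weakly submodular function, specialized to $f=R_p$ and $\gamma=(\gamma^+)^T$. Let $A_j$ be the greedy set after $j$ iterations of Algorithm~\ref{alg:hm}, with $A_0=\emptyset$, so that $R_p(A_0)=0$, and write $\gamma:=(\gamma^+)^T$. I will first check that $R_p$ is a legitimate input to Definition~\ref{def:weaksub}. It is nonnegative because $R_p(\emptyset)=0$ and Proposition~\ref{prop:mono-hm} gives $R_p(A)\ge R_p(\emptyset)$. It is monotone again by Proposition~\ref{prop:mono-hm}. Theorem~\ref{thm:weaksub-hm} supplies the inequality \eqref{eq:weaksub-def} with constant $\gamma$. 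I also note that Algorithm~\ref{alg:hm} selects $v^\star$ maximizing $H-H_v$, which equals $R_p(A_j\cup\{v\})-R_p(A_j)$. So it is exactly the marginal-gain greedy for $R_p$.

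\textbf{Key recursion.} Fix an iteration $j<k$. Apply \eqref{eq:weaksub-def} with the small set $A_j$, the large set $B=A_j$ itself, and $U=A^\star\setminus A_j$. This gives
\[
\sum_{v\in A^\star\setminus A_j}\bigl[R_p(A_j\cup\{v\})-R_p(A_j)\bigr]\;\ge\;\gamma\bigl[R_p(A_j\cup A^\star)-R_p(A_j)\bigr].
\]
Here I only need the case $A=B$, which Theorem~\ref{thm:weaksub-hm} certainly covers. By monotonicity, $R_p(A_j\cup A^\star)\ge R_p(A^\star)$. The left side has at most $k$ terms, and each is at most the greedy gain $R_p(A_{j+1})-R_p(A_j)$. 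Combining these,
\[
R_p(A_{j+1})-R_p(A_j)\;\ge\;\frac{\gamma}{k}\bigl[R_p(A^\star)-R_p(A_j)\bigr].
\]

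\textbf{Unrolling.} Set $D_j:=R_p(A^\star)-R_p(A_j)$. The recursion becomes $D_{j+1}\le(1-\gamma/k)D_j$. Starting from $D_0=R_p(A^\star)$, this yields $D_k\le(1-\gamma/k)^kR_p(A^\star)\le e^{-\gamma}R_p(A^\star)$, using $1-x\le e^{-x}$. Rearranging gives \eqref{eq:greedy-bound}. I will also remark that ties and candidates with zero gain do not affect the argument, since only the maximality of the selected gain is used.

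\textbf{Main obstacle.} The delicate point is not the algebra but the consistency of hypotheses. Theorem~\ref{thm:weaksub-hm} and Proposition~\ref{prop:mono-hm} are proved for the smoothed gate of Assumption~\ref{ass:gate} in the pure positive regime. Algorithm~\ref{alg:rash}, by contrast, uses the hard threshold and general $\lambda_i$. I will therefore state explicitly that the corollary is asserted for the greedy run on the smoothed-gate, pure-sign dynamics, where $\gamma^+>0$ by Remark~\ref{rem:gate-positivity}. The hard-threshold case is reached only as a limit, and the guarantee degenerates there because $L^-$ may tend to $0$ as $\kappa_i\to\infty$. The negative regime follows by the symmetry noted at the start of Section~\ref{sec:hm-props}.
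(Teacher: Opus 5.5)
Your proposal is correct and follows essentially the same route as the paper: apply Theorem~\ref{thm:weaksub-hm} with $A=B=A_j$ and $U=A^\star\setminus A_j$, use monotonicity to pass from $R_p(A_j\cup A^\star)$ to $R_p(A^\star)$, bound the at most $k$ marginal terms by the greedy gain, and unroll the gap recursion with factor $(1-\gamma/k)$ to obtain $e^{-\gamma}$. Your additional checks are sound caveats that the paper's proof leaves implicit. These are that $R_p\ge 0$, that Algorithm~\ref{alg:hm} is exactly marginal-gain greedy on $R_p$, and that the guarantee applies only to the smoothed-gate, pure-sign dynamics rather than the hard-threshold rule of Algorithm~\ref{alg:rash}.
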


\begin{IEEEproof}
Let $\gamma := (\gamma^+)^T$, let $A_j$ denote the greedy set after $j$
iterations, and let $\Delta_j := R_p(A^\star) - R_p(A_j)$. Applying
Theorem~\ref{thm:weaksub-hm} with $B = A_j$, $U = A^\star \setminus
A_j$, together with monotonicity (Proposition~\ref{prop:mono-hm}),
\begin{equation}
\sum_{v \in A^\star \setminus A_j} \bigl[R_p(A_j \cup \{v\}) - R_p(A_j)\bigr] \geq \gamma \Delta_j.
\end{equation}
Since $|A^\star \setminus A_j| \leq k$ and greedy takes the largest
single term on the left,
\begin{equation}
R_p(A_{j+1}) - R_p(A_j) \;\geq\; \frac{\gamma}{k}\, \Delta_j.
\end{equation}
This gives $\Delta_{j+1} \leq (1 - \gamma/k)\, \Delta_j$, so with
$\Delta_0 = R_p(A^\star)$ and $(1-\gamma/k)^k \leq e^{-\gamma}$,
\begin{equation}
\Delta_k \;\leq\; e^{-\gamma} R_p(A^\star),
\end{equation}
which rearranges to \eqref{eq:greedy-bound}.
\end{IEEEproof}

Equivalently, $H_p(A_{\mathrm{greedy}}) \leq H_p(A^\star) +
e^{-(\gamma^+)^T}\bigl[H_p(\emptyset) - H_p(A^\star)\bigr]$: Harm
Minimization's output never exceeds the optimum by more than an
$e^{-(\gamma^+)^T}$-fraction of the maximum achievable reduction.

\paragraph{Complexity.}
Each call to Algorithm~\ref{alg:rash} costs $O(T|E|)$, and each outer
iteration of Algorithm~\ref{alg:hm} re-evaluates every remaining
candidate from scratch, so building a full seed set of size $k$ costs
$O(k\,|V|\,T\,|E|)$. The dominant cost is the number of full
$\textsc{Rash}$ rollouts performed: $k$ outer iterations, each
re-evaluating up to $|V|$ remaining candidates, with each evaluation
requiring one $O(T|E|)$ rollout.

CELF \cite{12a} returns exactly the same seed set as
Algorithm~\ref{alg:hm} whenever the marginal gains it re-scores are
true, monotonically non-increasing upper bounds across iterations. Under
$(\gamma^+)^T$-weak submodularity, this monotone-bound property is not
guaranteed to hold for every instance, so we use
CELF here as an empirically effective
acceleration of Algorithm~\ref{alg:hm} rather than as a formally
guaranteed equivalent.

\section{Experiments}

The theoretical results in Section~\ref{sec:hm-props} establish that RASH and Harm Minimization are tractable optimization frameworks with monotone, weakly submodular objectives and that greedy seed selection provides a provable approximation guarantee. However, these results alone do not show whether the phenomena motivating RASH---awareness reversal, suppression under discouraging influence, and disproportionate harm to vulnerable individuals---arise in real signed networks, or whether Harm Minimization provides practical benefits when seed selection is compared with simpler heuristics under computational constraints.

We address these questions in two phases. Phase~1 evaluates whether RASH captures diffusion behaviors that existing signed diffusion models cannot, by comparing its propagation dynamics with four representative signed-network baselines. Phase~2 evaluates whether Harm Minimization, using the greedy seed-selection strategy from Section~\ref{sec:seed-selection}, protects vulnerable individuals more effectively than Influence Maximization, Influence Minimization, and several practical seed-selection heuristics, while also examining the associated computational cost. We first describe the experimental setup common to both phases, followed by the results.
\subsection{Experimental Setup}

\subsubsection{Datasets and Baseline Models}
We evaluate RASH and the Harm Minimization (HM) objective on
one synthetic and five real-world signed networks widely used
in prior signed-network research: a synthetic Barab\'asi--Albert
(BA) graph, Bitcoin Alpha, Bitcoin OTC, Epinions, Slashdot, and
Wiki-RfA. The five real-world networks are obtained from the
Stanford Large Network Dataset Collection
(SNAP)\footnote{\url{https://snap.stanford.edu/data/index.html}}.
Table~\ref{tab:datasets} summarizes their statistics, followed
by a brief description of each.

\begin{table}[H]
\caption{Summary statistics of the signed social network datasets used for evaluation.}
\label{tab:datasets}
\centering
\setlength{\tabcolsep}{4pt}
\renewcommand{\arraystretch}{1.05}
\begin{tabular}{lrrcc}
\hline
\textbf{Dataset} & \textbf{Nodes} & \textbf{Edges} & \textbf{Weighted} & \textbf{Temporal} \\
\hline
BA (synthetic) & 500 & -- & No & No \\
Bitcoin Alpha & 3,783 & 24,186 & Yes & Yes \\
Bitcoin OTC & 5,881 & 35,592 & Yes & Yes \\
Epinions & 131,828 & 841,372 & No & No \\
Slashdot & 81,871 & 545,671 & No & No \\
Wiki-RfA & 10,835 & 159,388 & No & No \\
\hline
\end{tabular}
\end{table}

1) \textit{BA (synthetic)}: A Barab\'asi--Albert scale-free
graph with $N=500$ nodes, generated with preferential
attachment and used as a controlled setting for validating
RASH's dynamics before evaluating on real signed networks.
Edge signs are assigned independently with positive-edge
probability $p^+ \in \{0.2, 0.5, 0.8\}$ to probe RASH's
sensitivity to the underlying sign structure.

2) \textit{Bitcoin Alpha / Bitcoin OTC}: Who-trusts-whom
networks in which users of two Bitcoin trading platforms rate
one another's trustworthiness on a scale of $-10$ to $+10$; we
take the sign of each rating as the edge polarity and its
magnitude as the edge weight $a_{ji}$.

3) \textit{Epinions}: A trust/distrust network from the
Epinions product-review platform, where users designate other
reviewers as trusted or distrusted; among our five real-world
datasets, it is the largest and thus the primary stress test for
scalability.

4) \textit{Slashdot}: The Slashdot Zoo signed social network, in
which users tag one another as ``friend'' or ``foe,'' giving a
polarized signed graph substantially denser in negative ties
than the trust-rating networks above.

5) \textit{Wiki-RfA}: Wikipedia Requests for Adminship voting
data, where each support or oppose vote cast by one editor on
another's candidacy is treated as a positive or negative directed
edge, respectively.

We compare RASH and HM against two groups of baselines,
corresponding to the two experimental tracks below.

\textit{Diffusion-dynamics baselines.} To validate that RASH
captures signed, non-monotonic diffusion phenomena that existing
signed diffusion models cannot, we compare against the four known
representative models as described below:

 (i) \textit{Personal Information Diffusion Model (PID) }\cite{PID}: Models information spread through polarity-dependent node states, with messages transitioning among unknown, insider, publisher, and forgetter states

 (ii) \textit{Signed Linear Threshold Model (SLT) }\cite{LTE}: Separately accumulates positive and negative influence, activating a node when their net signed influence exceeds its threshold.

 (iii) \textit{Signed Independent cascade Model (SNIC) }\cite{ICE}:Allows active nodes to probabilistically activate neighbors with polarity determined by the connecting signed edge, enabling competing positive and negative activation.

 (iv) \textit{Polarity Related Linear Influence Diffusion (PLID) }\cite{PLID}: Propagates positive and negative influence probabilities through linear iterations, avoiding Monte Carlo simulation and discrete activation.
 
 All four treat activation as a discrete,
irreversible event and are re-implemented under identical
network inputs and seed sets to ensure a fair comparison.

\textit{Harm Minimization (HM) baselines,} This optimzation objective being novel does not have any baseline, therefore we compare it with influence maximization and influence minimzation objectives in which greedy algorithm technique is being used for seed selection under RASH Model. We try to observe if our objective can maximize the influence through seeds while taking care of vulnerable nodes and hence reducing the harm. To evaluate the HM objective against the running time we have used different strategies to select the seeds and also checked the harm reduced at the cost of time employed> Following strategies are employed: 
(i) \textit{Random}: Selects the $k$ seed nodes uniformly at random, providing a simple and computationally inexpensive baseline for seed selection.

(ii) \textit{High Degree} \cite{HighDegree}:, which greedily
selects seeds to maximize direct coverage of a designated
vulnerable-node set, without accounting for shortfall severity
or vulnerability weighting; 

(iii) \textit{Dynamic Degree Discount} \cite{DynamicDegree}: Extends Degree Discount to temporal networks by using dynamic node degrees that capture evolving neighborhoods and discounting the scores of nodes adjacent to already selected seeds across time.

(iv) \textit{Betweenness}\cite{HighDegree}:Selects nodes with the highest betweenness centrality, prioritizing nodes that frequently lie on shortest paths and can therefore serve as important bridges for information propagation.

(v) \textit{Targeted Protection} \cite{TargetedProtection}: Prioritizes seed/protector nodes according to their ability to protect a specified target set from undesirable influence.

(vi) \textit{Reactive} \cite{reactive}: Dynamically places protective/trusted seeds after the locations of misinformation or negative influence become known, recomputing the seed positions for each spreading instance.

\subsubsection{Implementation Details}
All experiments are conducted on two NVIDIA A100 GPUs (80~GB
each). For the synthetic BA network, node-level parameters are
sampled as $\alpha_i \sim U(0.2, 0.8)$, $\lambda_i \sim U(0,1)$,
and $\theta_i \sim U(0.1, 0.4)$, following Section~IV-A. For the
five real-world networks, edge signs and weights are taken
directly from the data (Bitcoin ratings are normalized to
$[0,1]$ by magnitude; Slashdot, Epinions, and Wiki-RfA edges are
treated as unweighted, $a_{ji}=1$); node parameters are instead
computed from network structure using the estimators of
Section~IV-A, $\lambda_i = |\Gamma_i^-|/(|\Gamma_i^+| +
|\Gamma_i^-|)$ and $\alpha_i$ as the ratio of a node's in-degree
to the maximum observed in-degree, while $\theta_i$ is sampled
uniformly as in the synthetic setting, since no natural
resistance threshold is present in the raw data. All experiments
use a diffusion horizon of $T=60$. Seeds are initialized with
$x_i(0)=1$ for the positive regime and blockers with
$x_i(0)=-1$ for the negative regime, per Lemmas~1 and~3.

Because RASH's propagation rule is deterministic given a fixed
seed set (Section~IV-A), repeated trials are needed only to
average over randomness in baseline seed \emph{selection} (e.g.,
tie-breaking among equally central candidates), not over
stochastic cascades; each configuration is averaged over 20
independent trials.

\subsection{Results}

We report results in two phases. Phase~1 evaluates 
diffusion dynamics of RASH against the four signed baselines (PID, SLT,
SNIC, PLID) on synthetic and Epinions dataset. Phase~2, presented separately  will evaluate the Harm Minimization objective against its own baselines under the datasets given in Table \ref{tab:datasets}.

\subsubsection{Phase 1 RASH Diffusion Dynamics}
Fig.~\ref{rashvalidity} reports RASH's diffusion behavior against
four baseline models (PID, SLT, SNIC, PLID) across all six
datasets: BA (panel~a), Bitcoin Alpha (b), Bitcoin OTC (c),
Epinions (d), Slashdot (e), and Wiki-RfA (f). For each dataset,
three metrics are shown: (i) the cumulative count of reversed
nodes -- nodes that were positively aware at some time $t_1$ and
negatively aware at a later time $t_2 > t_1$; (ii) the cumulative
count of suppression events -- time steps at which a node's
awareness decreases despite having been positive at the previous
step; and (iii) negative spread $\sigma^-(S)$ over time under
greedy blocking budgets $k \in \{0,5,10,15\}$.

\paragraph{Reversed nodes and suppression events}
RASH is the only model that ever records a nonzero reversal or
suppression count; PID, SLT, SNIC, and PLID remain flat at zero
across all six networks, from $500$-node BA to $131$k-node
Epinions -- a categorical distinction, since SLT and SNIC enforce
irreversible activation by construction and PID/PLID lack a
suppression mechanism entirely. RASH's curves rise and plateau
within $10$--$20$ steps on the smaller networks (BA, Bitcoin
Alpha, Bitcoin OTC), while on the larger, denser networks
(Epinions, Slashdot, Wiki-RfA) suppression events keep
accumulating across the full horizon, consistent with these
networks offering more negative-edge pathways for eroding prior
awareness.

\paragraph{Negative spread under blocking}
Panel~(c) of each dataset shows a clear, scale-dependent pattern.
On the three smaller networks, increasing $k$ produces
well-separated, rapidly stabilizing plateaus, with the largest
reduction between $k{=}0$ and $k{=}5$ and diminishing reductions
thereafter,  the empirical signature of the $(\gamma^-)^T$-weak
submodularity guarantee. On the three larger
networks, negative spread instead oscillates persistently under
all budgets, with blocking budgets separated only within the
oscillation band. This is expected, not a failure: $\gamma^- =
\alpha_{\min}\cdot\mathrm{sech}^2(R^-)$ depends on $R^- =
M\alpha^-_{\max}|V|$, which grows with $|V|$, and
$\mathrm{sech}^2(\cdot)$ decays monotonically, so larger
networks admit a strictly weaker worst-case guarantee at the same
budget. Epinions, Slashdot, and Wiki-RfA are one to two orders of
magnitude larger than BA, Bitcoin Alpha, and Bitcoin OTC, so the
added volatility is precisely what a shrinking $(\gamma^-)^T$
predicts.

\begin{figure*}[t]
\centering

% Row 1: (a), (b)
\subfloat[BA (synthetic)]{%
    \includegraphics[width=0.47\textwidth]{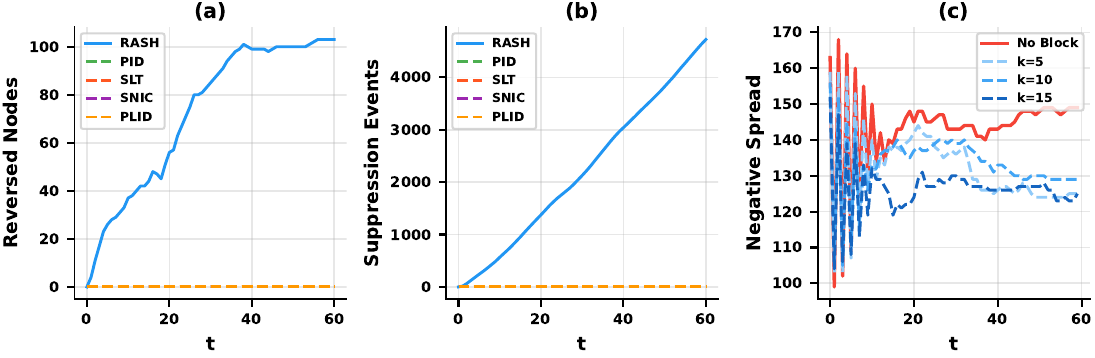}%
    \label{fig:activated_ba}
}
\hfill
\subfloat[Bitcoin Alpha]{%
    \includegraphics[width=0.47\textwidth]{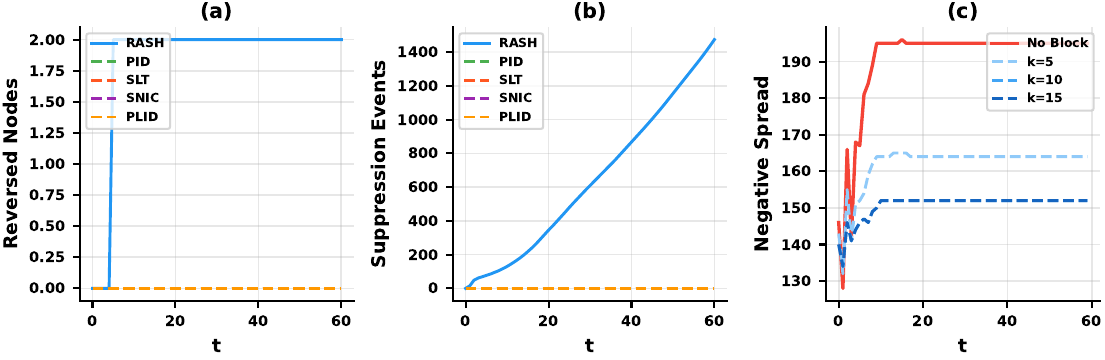}%
    \label{fig:activated_alpha}
}

\vspace{0.8em}

% Row 2: (c), (d)
\subfloat[Bitcoin OTC]{%
    \includegraphics[width=0.47\textwidth]{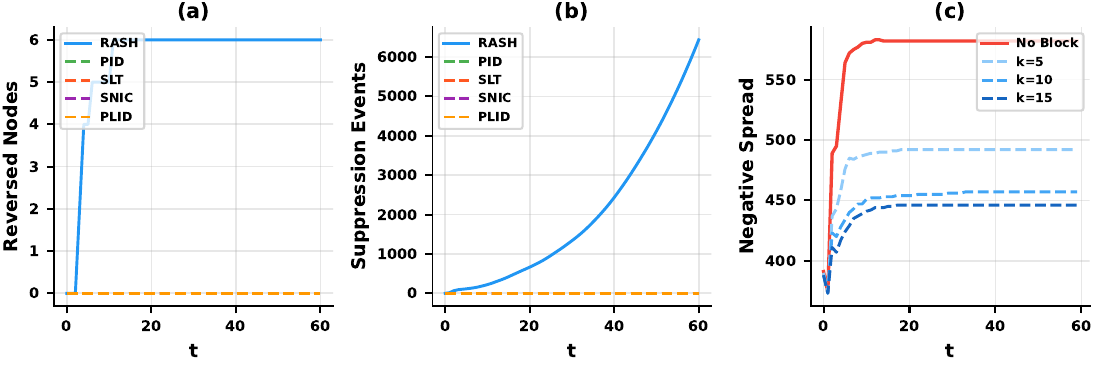}%
    \label{fig:activated_otc}
}
\hfill
\subfloat[Epinions]{%
    \includegraphics[width=0.47\textwidth]{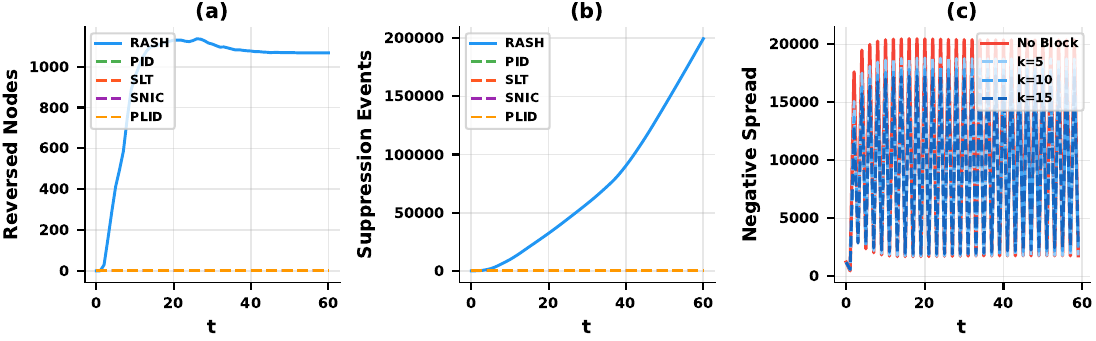}%
    \label{fig:activated_epinions}
}

\vspace{0.8em}

% Row 3: (e), (f)
\subfloat[Slashdot]{%
    \includegraphics[width=0.47\textwidth]{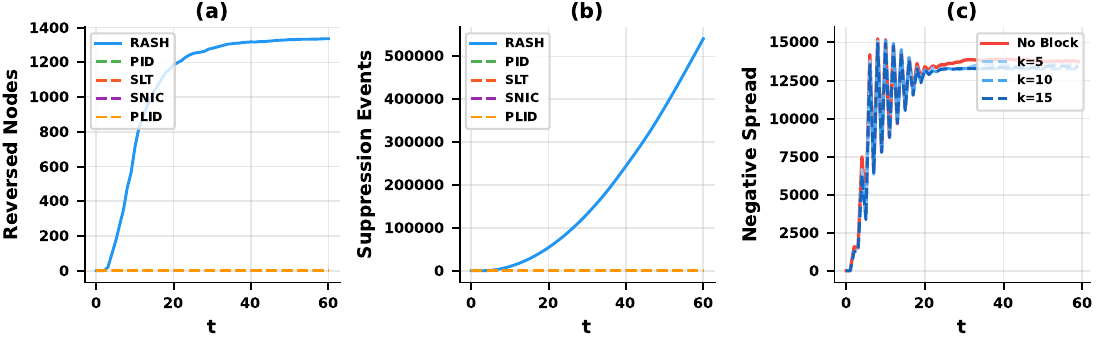}%
    \label{fig:activated_slashdot}
}
\hfill
\subfloat[Wiki-RfA]{%
    \includegraphics[width=0.47\textwidth]{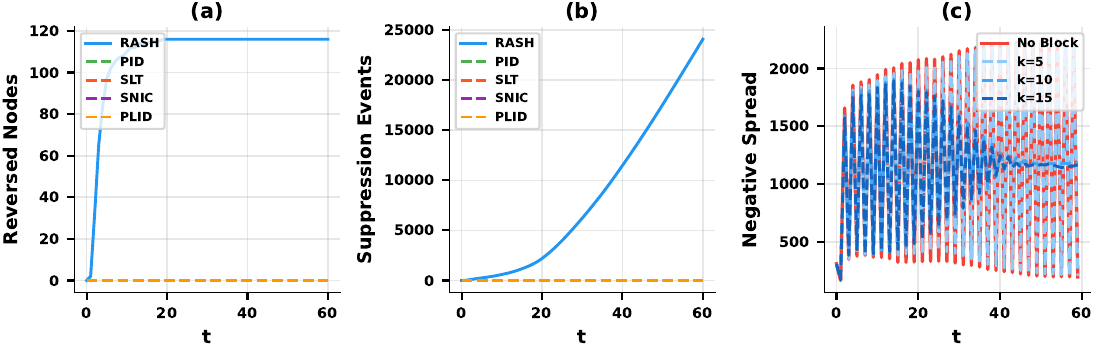}%
    \label{fig:activated_wikirfa}
}

\caption{RASH versus baseline diffusion models across various datasets.
For each dataset, (a) shows the cumulative count of nodes that were
positively aware at some time $t_1$ and negatively aware at a later time
$t_2>t_1$; (b) shows the cumulative count of suppression events, where a
node's awareness decreases despite having been positive at the previous
step; and (c) shows the negative spread $\sigma^-(S)$ over time under
greedy blocking budgets $k\in\{0,5,10,15\}$.}
\label{rashvalidity}
\end{figure*}

\subsubsection{Phase 2 Harm Minimization}
This phase evaluates the Harm Minimization (HM) objective directly
against its two theoretical boundary cases, Influence Maximization
(IM) and Influence Minimization (Inf-Min), and against six seed-selection
heuristics adapted from the influence-maximization and minimization literature \cite{HighDegree} \cite{TargetedProtection} \cite{reactive} \cite{DynamicDegree} .
Fig.~\ref{fig:activated_all} reports the number of activated nodes
against seed set size for IM, Inf-Min, and HM across all six
datasets; Fig.~\ref{fig:time_harm} reports the corresponding
percentage harm reduction; and Table~\ref{tab:runtime_comparison} reports
seed-selection running time alongside harm reduction for HM against
the  heuristic baselines 

\paragraph{Activated nodes across objectives.}
Fig.~\ref{fig:activated_all} shows Inf-Min's curve remaining near
zero across every dataset and seed budget, confirming that a purely
suppressive objective, by construction, never seeks positive reach.
IM's curve rises sharply and dominates throughout, as expected of an
objective whose sole aim is maximizing raw activation. HM's curve
consistently tracks between the two, and its position relative to IM
is itself informative: on the smaller, sparser networks (BA, Bitcoin
Alpha, Bitcoin OTC, Wiki-RfA), HM closely trails IM, activating
within a few percent of IM's reach at every budget; on the larger,
denser networks (Epinions, Slashdot), the gap between HM and IM
widens substantially. This is consistent with harm-aware seeding
redirecting a larger share of the budget toward protecting vulnerable
individuals precisely where the network offers more opportunities for
harm, exactly the networks in which Phase~1 (Fig.~\ref{rashvalidity})
showed the most persistent negative-spread activity. HM's reduced
activation on these networks is therefore not a shortcoming but a
deliberate consequence of the objective: it declines to chase
additional raw reach once that reach no longer contributes to
reducing weighted shortfall.

\begin{figure*}[t!]
\centering

\subfloat[BA (synthetic)]{%
    \includegraphics[width=0.32\textwidth]{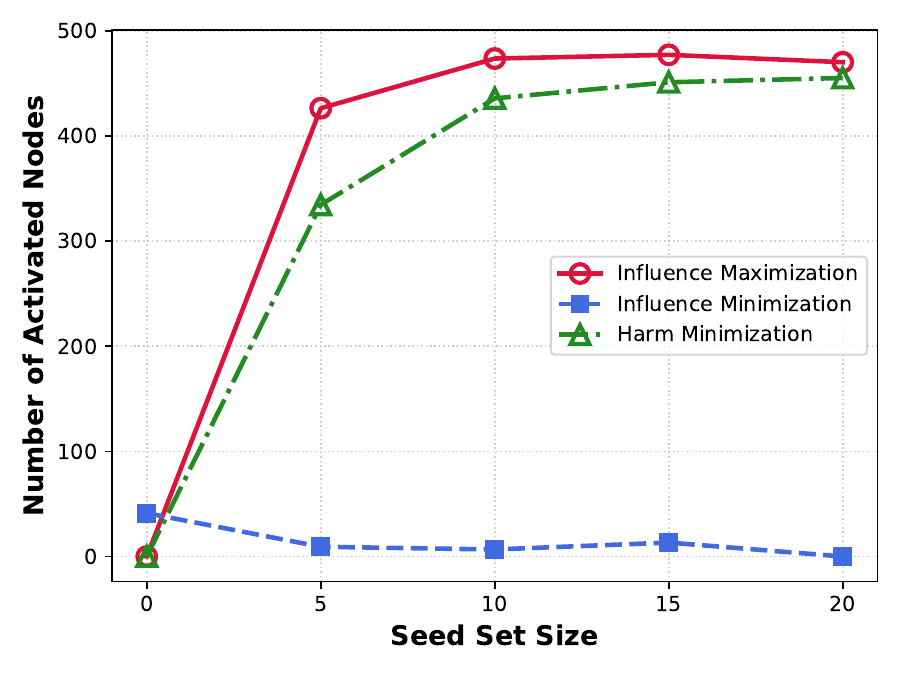}%
    \label{fig:comp_ba}
}
\hfill
\subfloat[Bitcoin Alpha]{%
    \includegraphics[width=0.32\textwidth]{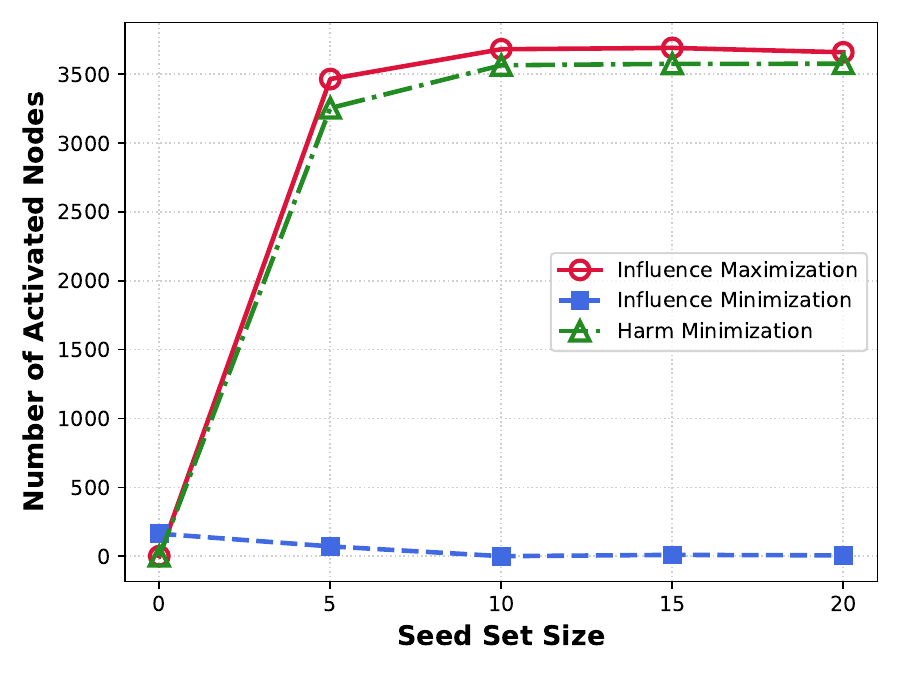}%
    \label{fig:comp_bita}
}
\hfill
\subfloat[Bitcoin OTC]{%
    \includegraphics[width=0.32\textwidth]{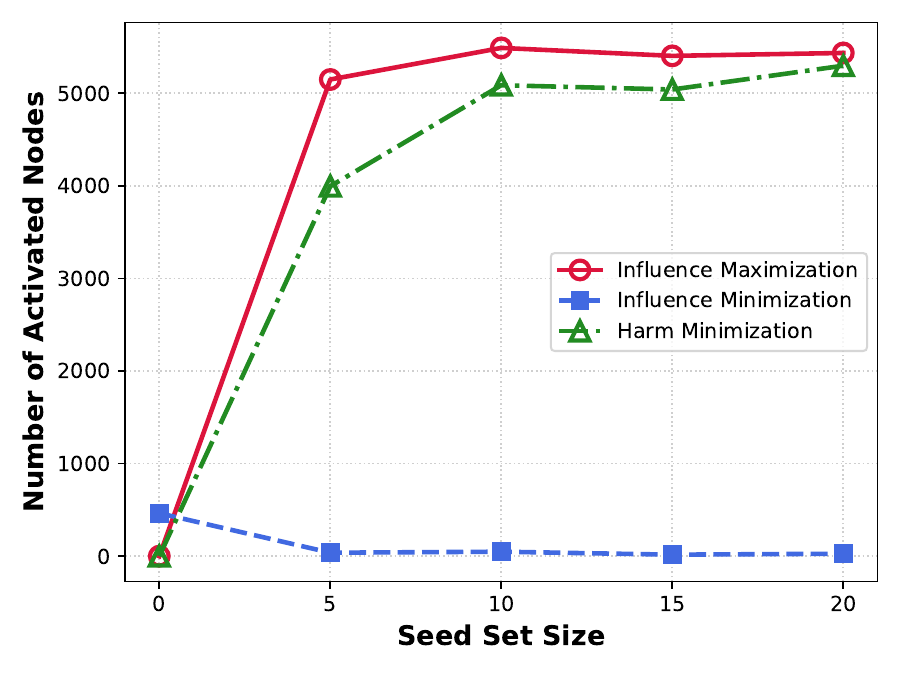}%
    \label{fig:comp_otc}
}

\vspace{0.6em}

\subfloat[Epinions]{%
    \includegraphics[width=0.32\textwidth]{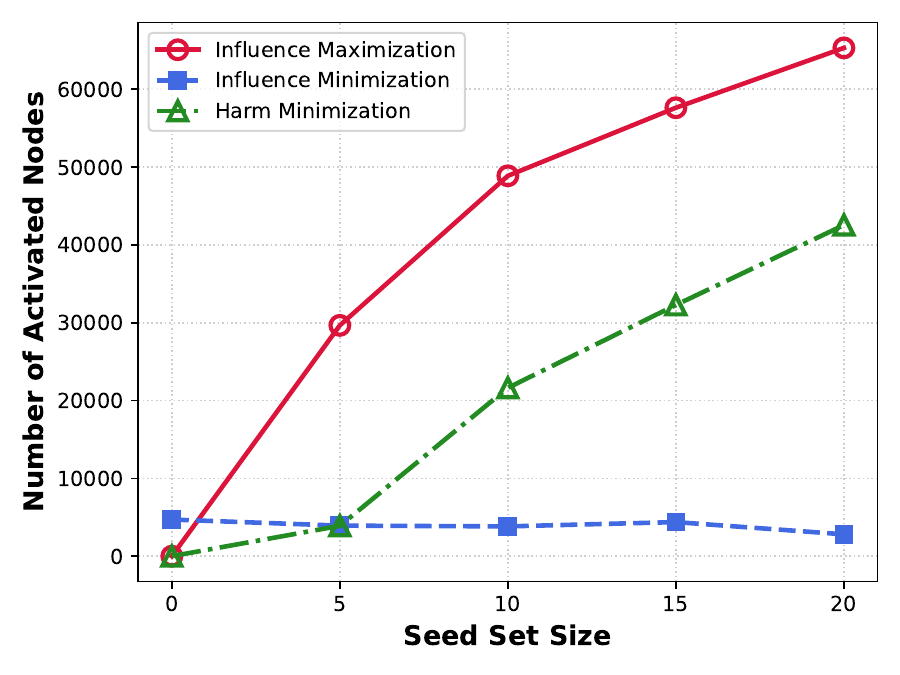}%
    \label{fig:comp_epinions}
}
\hfill
\subfloat[Slashdot]{%
    \includegraphics[width=0.32\textwidth]{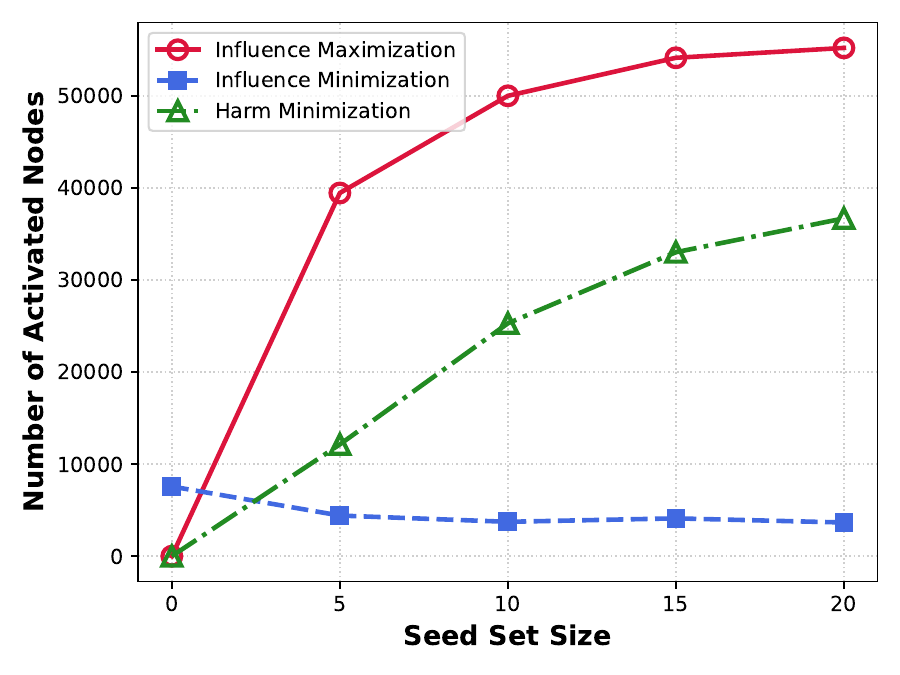}%
    \label{fig:comp_slashdot}
}
\hfill
\subfloat[Wiki-RfA]{%
    \includegraphics[width=0.32\textwidth]{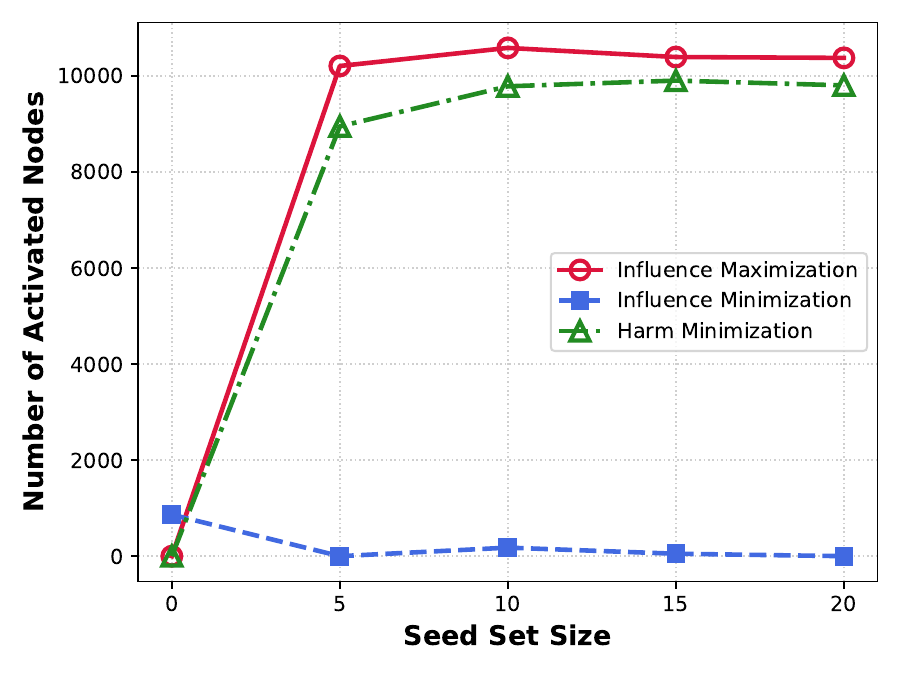}%
    \label{fig:comp_wikirfa}
}

\caption{Number of activated nodes versus seed set size for Influence
Maximization, Influence Minimization, and Harm Minimization across all
six datasets.}
\label{fig:activated_all}
\end{figure*}

\paragraph{Harm reduction across objectives}
Fig.~\ref{fig:time_harm} makes the practical consequence of this
design explicit. HM achieves the highest percentage harm reduction at
every seed budget on every dataset, typically approaching or reaching
full saturation on the smaller networks (BA, Bitcoin Alpha, Bitcoin
OTC) by $k{=}10$--$15$, and maintaining a decisive margin over both
IM and Inf-Min on the larger networks even where saturation is not
reached. IM, despite dominating raw activation in
Fig.~\ref{fig:activated_all}, plateaus at a substantially lower harm
reduction ($\sim$50--65\% on most datasets), underscoring that
maximizing spread and minimizing harm are not the same objective:
IM routes influence toward whichever nodes are structurally easiest
to reach, with no mechanism to preferentially protect individuals
who need it most. Inf-Min performs weakest of the three throughout,
since suppressing negative spread alone provides no positive
awareness to the population it is meant to protect. This pattern
holds without exception across all six datasets, indicating that
HM's advantage is a property of the objective itself rather than an
artifact of any single network's structure or scale.

\begin{figure*}[t!]
\centering

\subfloat[BA (synthetic)]{%
    \includegraphics[width=0.32\textwidth]{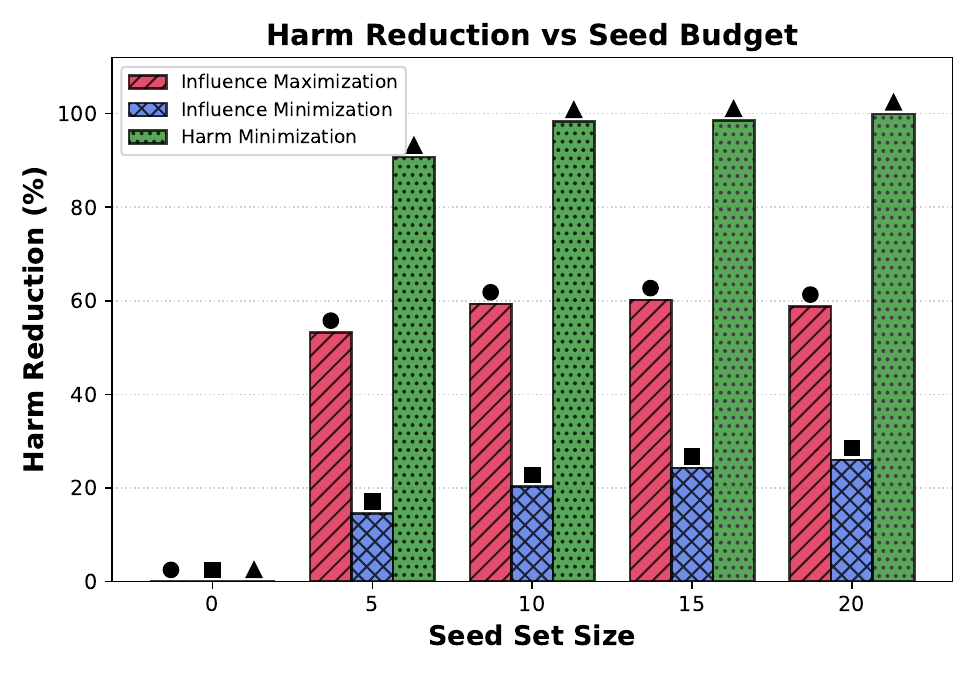}%
    \label{fig:harm_ba}
}
\hfill
\subfloat[Bitcoin Alpha]{%
    \includegraphics[width=0.32\textwidth]{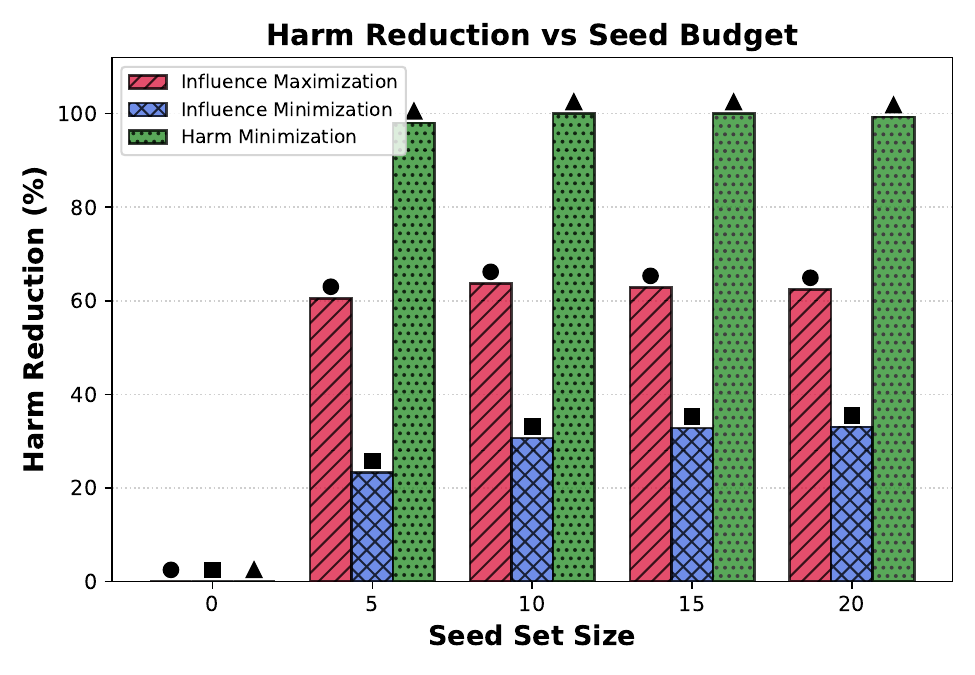}%
    \label{fig:harm_alpha}
}
\hfill
\subfloat[Bitcoin OTC]{%
    \includegraphics[width=0.32\textwidth]{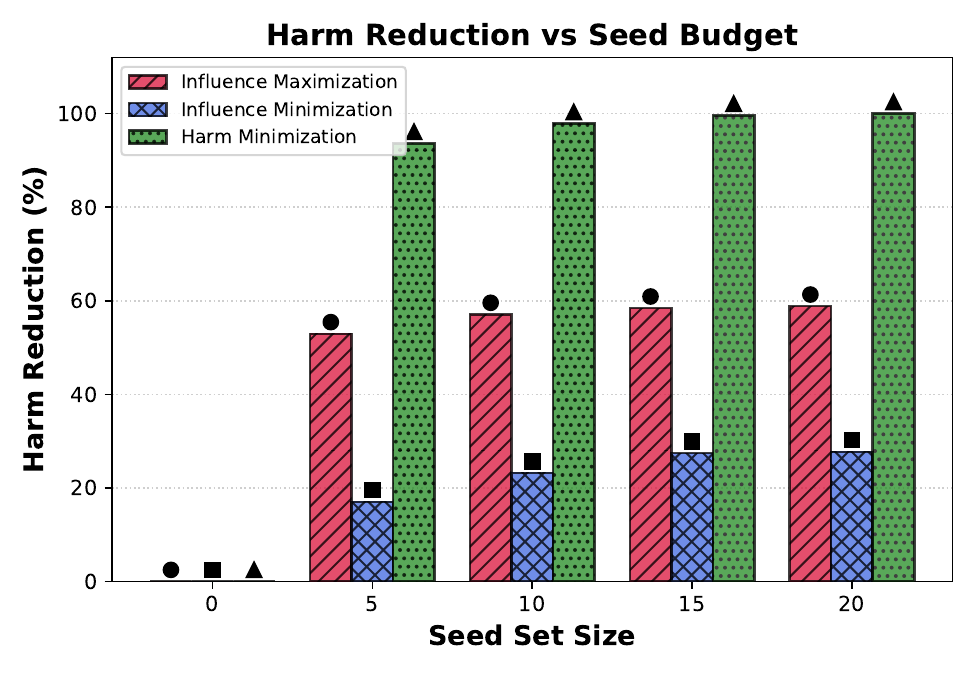}%
    \label{fig:harm_otc}
}

\vspace{0.6em}

\subfloat[Epinions]{%
    \includegraphics[width=0.32\textwidth]{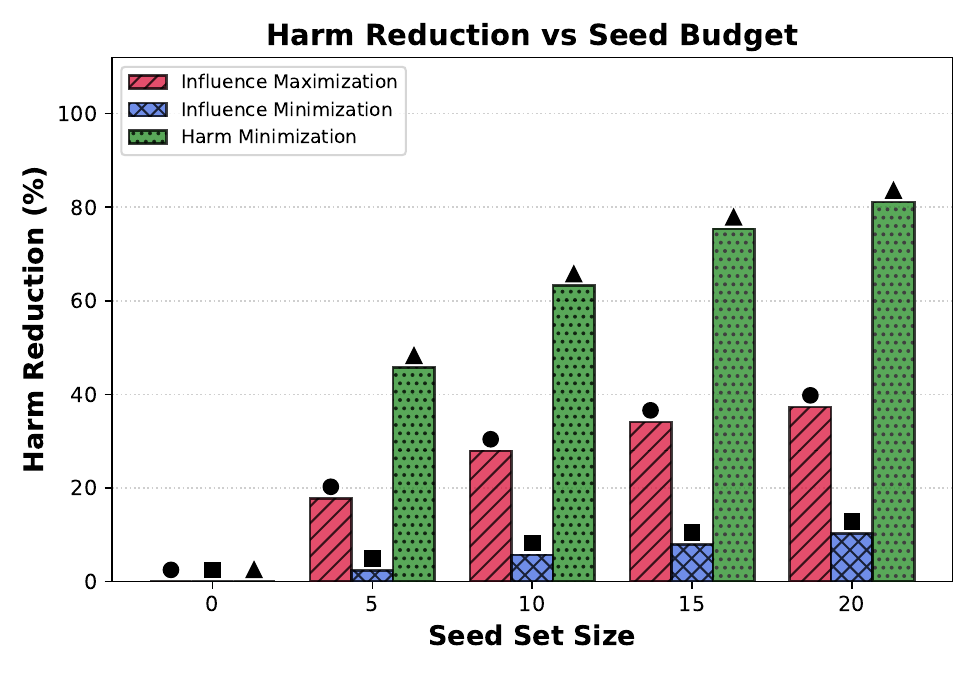}%
    \label{fig:harm_epinions}
}
\hfill
\subfloat[Slashdot]{%
    \includegraphics[width=0.32\textwidth]{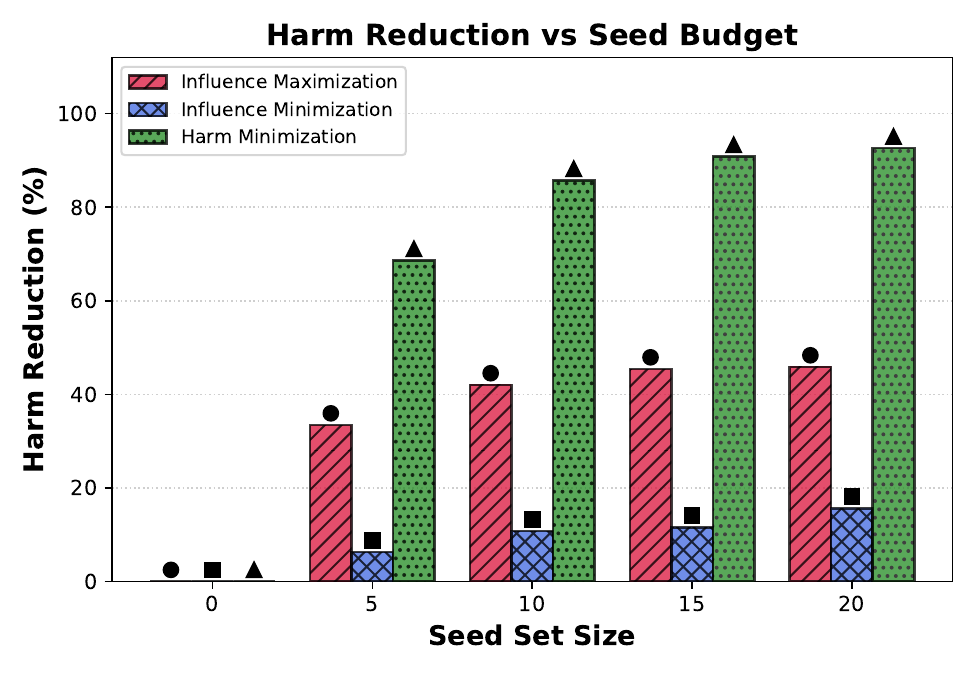}%
    \label{fig:harm_slashdot}
}
\hfill
\subfloat[Wiki-RfA]{%
    \includegraphics[width=0.32\textwidth]{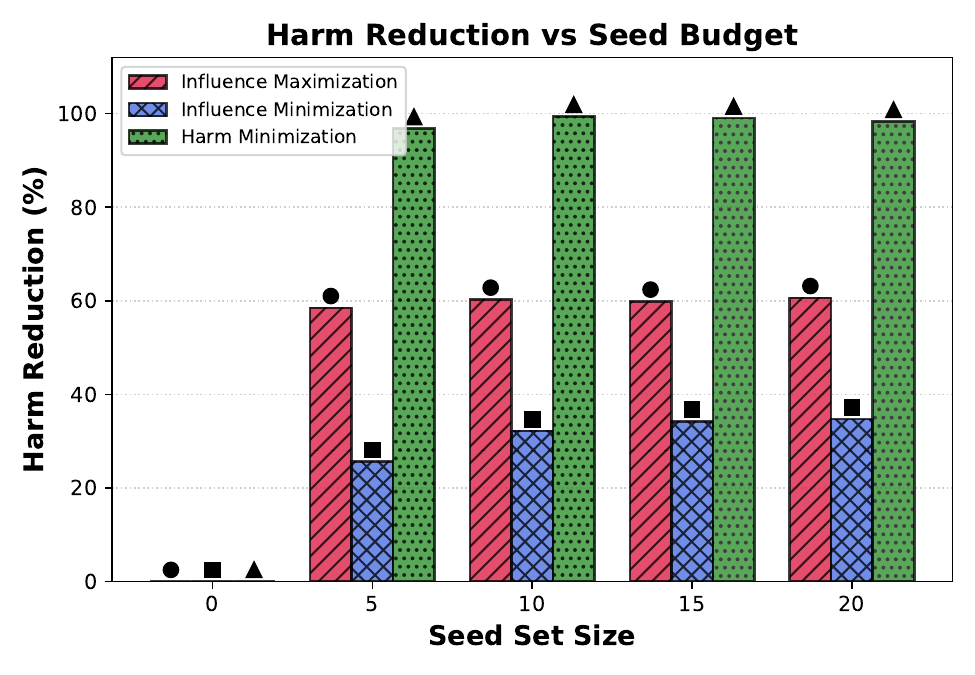}%
    \label{fig:harm_wikirfa}
}

\caption{Percentage harm reduction versus seed set size for Influence
Maximization, Influence Minimization, and Harm Minimization across all
six datasets.}
\label{fig:time_harm}
\end{figure*}
\paragraph{Running time and cost-effectiveness}
Table~\ref{tab:runtime_comparison} situates HM's harm reduction against its computational cost relative to six lighter-weight heuristics. The
structural heuristics (Random, High-Degree, Degree Discount) are
essentially instantaneous but achieve only modest harm reduction
(19--72\% across datasets and budgets); Betweenness and Targeted
Protection incur greater cost, particularly on Epinions and Slashdot,
while improving harm reduction only incrementally. Reactive seeding
remains fast throughout but its harm reduction, while competitive on
smaller networks, falls short of HM on the larger ones. Greedy HM is
the most computationally expensive method at every budget, with
running time scaling with both seed budget and network size,
consistent with the $O(k\,|V|\,T\,|E|)$ complexity bound yet it is also, without exception, the
highest-harm-reducing method in every cell of the table \ref{tab:runtime_comparison}. On Epinions,
for example, HM's running time grows from $33.25$s at $k{=}5$ to
$128.40$s at $k{=}20$, while its harm reduction correspondingly rises
from $46\%$ to $81\%$, a margin no baseline approaches at any tested
budget. This establishes HM's computational overhead as the direct
and interpretable cost of its selectivity, rather than an incidental
inefficiency- the additional time is spent evaluating candidates
against the shortfall-weighted objective that the cheaper heuristics
do not compute at all.
\begin{table*}[t!]
\centering
\caption{Comparison of seed-selection running time (seconds) and harm
reduction (\%) across datasets and seed-set sizes for the Harm
Minimization objective. Each entry reports running time followed by
harm reduction in parentheses.}
\label{tab:runtime_comparison}
\footnotesize
\setlength{\tabcolsep}{5.2pt}
\begin{tabular}{cc|c|c|c|c|c|c|c}
\toprule
\textbf{Dataset} & \textbf{$k$}
& \textbf{Random}
& \textbf{High-Deg.}
& \textbf{Deg. Discount}
& \textbf{Betweenness}
& \textbf{Targeted Prot.}
& \textbf{Reactive}
& \textbf{Greedy HM} \\
\midrule

\multirow{4}{*}{BA}
& 5
& 0.001 (38) & 0.001 (52) & 0.004 (61) & 0.170 (69)
& 0.20 (75) & 0.003 (82) & \textbf{3.05 (90)} \\
& 10
& 0.001 (46) & 0.001 (61) & 0.004 (69) & 0.171 (77)
& 0.34 (83) & 0.003 (91) & \textbf{6.02 (98)} \\
& 15
& 0.001 (48) & 0.001 (63) & 0.004 (71) & 0.172 (78)
& 0.43 (84) & 0.003 (92) & \textbf{8.91 (99)} \\
& 20
& 0.001 (49) & 0.001 (64) & 0.004 (72) & 0.172 (79)
& 0.51 (85) & 0.003 (93) & \textbf{11.84 (100)} \\

\midrule

\multirow{4}{*}{Bitcoin Alpha}
& 5
& 0.001 (41) & 0.001 (55) & 0.006 (65) & 0.31 (74)
& 0.38 (82) & 0.006 (91) & \textbf{5.42 (98)} \\
& 10
& 0.001 (44) & 0.001 (58) & 0.006 (68) & 0.32 (77)
& 0.65 (84) & 0.006 (94) & \textbf{10.61 (100)} \\
& 15
& 0.001 (45) & 0.001 (59) & 0.007 (69) & 0.32 (78)
& 0.91 (85) & 0.006 (95) & \textbf{15.73 (100)} \\
& 20
& 0.001 (45) & 0.001 (59) & 0.007 (69) & 0.32 (78)
& 1.18 (86) & 0.006 (95) & \textbf{20.89 (99)} \\

\midrule

\multirow{4}{*}{Bitcoin OTC}
& 5
& 0.001 (39) & 0.001 (53) & 0.006 (62) & 0.34 (71)
& 0.42 (78) & 0.006 (87) & \textbf{5.89 (94)} \\
& 10
& 0.001 (43) & 0.001 (57) & 0.006 (67) & 0.35 (76)
& 0.71 (83) & 0.006 (92) & \textbf{11.52 (98)} \\
& 15
& 0.001 (45) & 0.001 (59) & 0.007 (69) & 0.35 (78)
& 1.02 (85) & 0.006 (94) & \textbf{17.08 (100)} \\
& 20
& 0.001 (46) & 0.001 (60) & 0.007 (70) & 0.36 (79)
& 1.31 (86) & 0.006 (95) & \textbf{22.56 (100)} \\

\midrule

\multirow{4}{*}{Epinions}
& 5
& 0.002 (19) & 0.002 (25) & 0.012 (29) & 1.84 (34)
& 2.15 (38) & 0.018 (42) & \textbf{33.25 (46)} \\
& 10
& 0.002 (27) & 0.002 (34) & 0.013 (40) & 1.86 (46)
& 4.20 (51) & 0.018 (57) & \textbf{65.05 (63)} \\
& 15
& 0.002 (32) & 0.002 (41) & 0.013 (47) & 1.87 (55)
& 6.24 (61) & 0.018 (68) & \textbf{96.80 (75)} \\
& 20
& 0.002 (35) & 0.002 (44) & 0.014 (51) & 1.89 (59)
& 8.27 (66) & 0.018 (74) & \textbf{128.40 (81)} \\

\midrule

\multirow{4}{*}{Slashdot}
& 5
& 0.001 (25) & 0.001 (32) & 0.009 (38) & 0.92 (44)
& 1.10 (49) & 0.011 (55) & \textbf{19.15 (61)} \\
& 10
& 0.001 (33) & 0.001 (41) & 0.010 (47) & 0.93 (53)
& 2.15 (59) & 0.011 (66) & \textbf{37.48 (72)} \\
& 15
& 0.001 (37) & 0.001 (45) & 0.010 (52) & 0.94 (58)
& 3.18 (64) & 0.011 (71) & \textbf{55.65 (79)} \\
& 20
& 0.001 (40) & 0.001 (48) & 0.011 (56) & 0.95 (62)
& 4.21 (68) & 0.011 (75) & \textbf{73.58 (84)} \\

\midrule

\multirow{4}{*}{Wiki-RfA}
& 5
& 0.001 (29) & 0.001 (37) & 0.008 (43) & 0.61 (49)
& 0.73 (55) & 0.009 (62) & \textbf{13.32 (69)} \\
& 10
& 0.001 (37) & 0.001 (46) & 0.009 (52) & 0.62 (58)
& 1.42 (64) & 0.009 (71) & \textbf{25.98 (77)} \\
& 15
& 0.001 (41) & 0.001 (50) & 0.009 (57) & 0.63 (63)
& 2.10 (69) & 0.009 (76) & \textbf{38.45 (83)} \\
& 20
& 0.001 (44) & 0.001 (54) & 0.010 (61) & 0.64 (67)
& 2.77 (73) & 0.009 (81) & \textbf{50.72 (88)} \\

\bottomrule
\end{tabular}
\end{table*}

\section{Conclusion}
Classical diffusion models and influence objectives share a
limitation in signed social networks - reaching an individual is not
always beneficial, and failing to reach one is not equally costly
for everyone. Independent Cascade, Linear Threshold, and their
signed extensions model activation as discrete and irreversible,
preventing awareness from weakening or reversing under competing
influence, while Influence Maximization and Influence Minimization
optimize aggregate spread without accounting for individual
vulnerability. This paper addressed both limitations jointly through
RASH, which models awareness as continuous, bounded, and
non-monotonic while remaining monotone and $\gamma$-weakly
submodular in the pure-positive and pure-negative regimes and 
Harm Minimization (HM) objective, which optimizes vulnerability-weighted
awareness shortfall rather than raw spread and, though NP-hard,
inherits the same monotonicity and weak-submodularity structure to
yield a bounded greedy approximation guarantee. Experiments across
six structurally diverse signed networks showed RASH to be the only
evaluated model producing nonzero awareness reversal or suppression,
and HM to achieve the highest harm reduction across all datasets and
budgets, outperforming IM, Inf-Min, and six practical heuristics.
HM's intermediate raw activation reflects its objective of reducing
weighted shortfall rather than maximizing reach, and the widening
trade-off on larger networks follows the $(\gamma^+)^T$-decay
predicted by our theory. RASH and HM shift protective
information campaigns from maximizing reach to minimizing harm
experienced by vulnerable individuals.

\textbf{Limitations.} The greedy algorithm does not eliminate
worst-case seed-set instability permitted by weak submodularity,
observed occasionally as elevated variance rather than systematic
failure. The candidate-pool restriction used for large networks
also makes reported solutions optimal only within a degree-ranked
subset; incorporating high-vulnerability, low-degree individuals
into this pool is a natural extension. Finally, $b_i$ and $w_i$ are
currently derived from structural signals alone, though HM can
directly incorporate richer risk information when available.

\textbf{Future work.} Extending RASH and HM to dynamic signed
networks, where $E^+$, $E^-$, and vulnerability profiles evolve
during a campaign, would support applications such as public-health
advisories and crisis communication. Studying the policy parameter
$p$ as $p \to \infty$, which shifts $H_p(A)$ from broad harm
reduction toward protecting the single worst-off individual, would
further clarify how HM balances aggregate and worst-case
protection.

\bibliographystyle{IEEEtran}
\bibliography{ref}

\end{document}